\newcommand{\rrvert}{\vert}
\newcommand{\llvert}{\vert}
\renewcommand{\mid}{|}
\newtheorem{theorem}{Theorem}[section]
\newtheorem{proposition}{Proposition}[section]
\newtheorem{lemma}{Lemma}[section]
\newtheorem{corollary}{Corollary}[section]
\newcommand{\E}{\mathbb E}
\newcommand{\R}{\mathbb R}
\newcommand{\sA}{\mathcal A}
\newcommand{\sB}{\mathcal B}
\newcommand{\sP}{\mathcal P}
\newcommand{\sM}{\mathcal M}
\newcommand{\sU}{\mathcal U}
\newcommand{\uU}{\varrho}
\begin{document}
\begin{frontmatter}

\title{Gambling in contests with random initial law}
\runtitle{Contests with random initial law}

\begin{aug}
\author[A]{\fnms{Han}~\snm{Feng}\ead[label=e1]{H.Feng@warwick.ac.uk}}
\and
\author[A]{\fnms{David}~\snm{Hobson}\corref{}\ead
[label=e2]{D.Hobson@warwick.ac.uk}}
\runauthor{H. Feng and D. Hobson}
\affiliation{University of Warwick}
\address[A]{Department of Statistics\\
University of Warwick\\
Coventry, CV4 7AL\\
United Kingdom\\
\printead{e1}\\
\phantom{E-mail: }\printead*{e2}}
\end{aug}

%
\received{\smonth{5} \syear{2014}}
%
\revised{\smonth{11} \syear{2014}}

%
\begin{abstract}
This paper studies a variant of the contest model introduced in
Seel and Strack [\textit{J. Econom. Theory} \textbf{148} (2013) 2033--2048].
In the Seel--Strack contest, each agent or contestant
privately observes a Brownian motion, absorbed at zero, and chooses
when to stop
it. The winner of the contest is the agent who stops at the highest value.
The model assumes that all the processes start from a common value
$x_0>0$ and
the symmetric Nash equilibrium is for each agent to utilise a stopping rule
which yields a randomised value for the stopped process. In the two-player
contest, this randomised value has a uniform distribution on $[0,2x_0]$.

In this paper, we consider a variant of the problem whereby the
starting values
of the Brownian motions are independent, nonnegative random variables that
have a common law $\mu$. We consider a two-player contest and prove the
existence and uniqueness of a symmetric Nash equilibrium for the
problem. The
solution is that each agent should aim for the target law $\nu$, where
$\nu$ is
greater than or equal to $\mu$ in convex order; $\nu$ has an atom at
zero of the
same size as any atom of $\mu$ at zero, and otherwise is atom free; on
$(0,\infty)$ $\nu$ has a decreasing density; and the density of $\nu
$ only
decreases at points where the convex order constraint is binding.
\end{abstract}

%
\begin{keyword}[class=AMS]
\kwd[Primary ]{60G40}
\kwd[; secondary ]{60J65}
\kwd{91A05}
\end{keyword}
\begin{keyword}
\kwd{Gambling contest}
\kwd{Nash equilibrium}
\kwd{Skorokhod embedding}
\kwd{Seel--Strack problem}
\end{keyword}
\end{frontmatter}

\section{Introduction}

Seel and Strack (\citeyear{SeelStrack2013}) introduced a contest model in which each
agent chooses a stopping rule to stop a privately observed stochastic
process. The contestant who stops her process at the highest value wins
a prize. The objective of each agent is not
to maximise the expected stopping value, but rather to maximise the
probability that her stopping value is the highest amongst the set
of stopping values of all the contestants.

The Seel--Strack contest is a stylised model of a contest between
agents in which agents
compete to win a single prize. Examples include certain internet casino
games (where competitors ante a fixed amount, then play independently
with notional funds, with the prize awarded to the contestant with the
highest notional fortune at the end of the game), competitions between
fund managers
(where each manager aims to outperform all the others in order to
obtain more funds to invest
over the next time period) and competitions between company CEOs (only
the most successful of whom, in relative terms, will be offered an
executive position at a larger company); see
\citet{SeelStrack2013} for further details and examples. The key
feature of these contests is that the cost in terms of effort is not
dependent on the riskiness of the strategy, and that agents do not earn
rewards from the value of their entry into the contest, but only from
the relative value or rank order of the entry.


The modelling assumptions of \citet{SeelStrack2013} include the
fact that contestants are unable to observe both the realisations and
the stopping times of their rivals and the fact that the privately
observed stochastic processes are independent realisations of a
drifting Brownian motion absorbed at zero. As argued by \citet
{FengHobson14}, the setting can be generalised to allow for
independent realisations of any nonnegative, time-homogeneous
diffusion process since a change of scale and time reduces the problem
to the martingale case of Brownian motion without drift. Henceforth, we
will concentrate on the case of drift-free Brownian motion, absorbed at
zero. We will focus on the two-player case.

\citet{SeelStrack2013} studied the symmetric case where the
contestants observe processes which all
start from the same strictly positive constant.
In this paper, we will discuss an extension of the Seel--Strack problem
to randomised initial values, whereby the starting values of the
Brownian motions
are drawn independently from the (commonly known) distribution $\mu$,
where $\mu$ is any integrable probability measure on $\R^+$. This might
correspond to casino gamblers who are obliged to participate in a
minimum number of bets before closing out their position, fund managers
with different initial portfolios (which are unknown to competing
agents) or to newly installed CEOs taking positions in companies of
different strengths.

A very attractive feature of the
Seel--Strack contest model is that it has an explicit symmetric Nash
equilibrium, as constructed in \citet{SeelStrack2013}.
Stopping rules are identified with target laws for the entry into the
contest and in equilibrium players use randomised strategies, so that
the level
at which the player should stop is stochastic. Moreover, the set of
values at which the agent should stop forms an interval which is
bounded above. In the two-player case, the target law is a uniform
distribution (so that if the initial wealth of both agents is $x$ then
the target law has constant density $1/2x$ on $[0,2x]$); see
Example~\ref{eguni} below. Our results extend \citet
{SeelStrack2013} to the case where the initial values of the Brownian
motions are independent draws from a common initial law. As in the
Seel--Strack contest, stopping rules are identified with target laws,
and in equilibrium players use randomised strategies, so that the level
at which the player should stop is stochastic. Now, however, the set of
values at which the agent should stop forms an interval which can be
unbounded above (if the initial law has unbounded support). Since
terminal laws are obtained from stopping a nonnegative martingale, it
is clear that any attainable terminal law has a mean which is equal to
or less than the mean of the initial law, and it is natural to expect
that an optimal terminal law has highest mean possible, or equivalently
we expect that for an optimal stopping rule the mean of the terminal
law should equal that of the initial law. Then the candidate terminal
laws are precisely those which are greater than or equal to the initial
law in convex order. In fact, we show that the optimal law has the twin
properties that (with the possible exception of an atom at zero of the
same size as the inital law) the target law has a density which is
decreasing, and the density decreases only at those levels where the
convex order constraint, expressed via the potential of the
distributions, is binding. We have two main results: first, we show
that any distribution with these properties characterises a symmetric
Nash equilibrium for the problem (Theorem~\ref{thmcharacterisation});
and second we show that for any initial law there is exactly one target
law with these properties (Theorem~\ref{thmSingleton}), and hence
there is a unique symmetric Nash equilibrium for the problem.

The Seel--Strack contest is a recent addition to the literature in
economics and has not yet been greatly studied.
However, as \citet{SeelStrack2013} emphasise, there are strong
connections between their contest model and all-pay auctions [see,
e.g., \citet{BayeKovenockVries1996}], wars of attrition [\citet
{HendricksWeissWilson1988}] and silent timing games [\citet
{ParkSmith2008}]. Indeed in the setting of $n$ exchangeable agents
with fixed initial wealth (equal to $1/n$), the solution of the
Seel--Strack contest is identical to that of $n$-agents in an all-pay
auction with cost equal to bid size. In both situations, agents
randomise their strategy in order to introduce uncertainty into the
value they enter into the contest/auction. This makes it more difficult
(and in equilibrium, impossible) for opponents to take advantage of
knowledge of the distribution of the entry. If in the two-player
Seel--Strack contest an agent chooses an entry which is a point mass
(at the initial wealth $x$), then the opponent can play until his
wealth is $x+\varepsilon$ or he goes bankrupt; thus winning the contest
with probability $x/(x + \varepsilon)$. Since $\varepsilon$ can be chosen
arbitrarily the opponent can choose a strategy for which the
probability of him winning is arbitrarily close to unity, and thus the
strategy of the first player cannot be optimal.

In contrast to the Seel--Strack contest, the all-pay auction is much
studied. The all-pay auction is used as a model of technological
competition, political lobbying and job promotion (with effort).
Several generalisations of the all-pay auction have been studied.
Sometimes in these auctions one or more of the agents has a headstart
[\citet{Konrad2002}], perhaps from prior effort, or from being the
incumbent in an election campaign.
In recent work, \citet{Seel2013} studies the all-pay auction with
random head starts, which is the direct analogue of the problem studied
here. Seel finds the Nash equilibrium in an asymmetric, two player
all-pay auction where one of the players has a random headstart, and
the ideas can be extended to the symmetric case where both players have
random headstarts.

Relative to the main results of \citet{Seel2013} on all-pay
auctions with random headstarts, we find that the symmetric Nash
equilibrium in the
Seel--Strack contest with random initial law is more subtle. In the
symmetric two-player all-pay auction with random headstart, the target
law for the total bid has a density taking values in $\{0,1\}$. In
particular, there may be gaps in the support of the distribution. In
contrast, in the symmetric two-player Seel--Strack contest with random
initial law, we find that the support of the target law has no holes,
and that the density is monotonic decreasing. This indicates that the
Seel--Strack contest and the all-pay auction are perhaps less closely
linked than might be expected from considering the base case alone. In
the case without headstarts/with point mass initial law, the Nash
equlibria in the two settings are identical, but this does not carry
over to generalisations of the two problems. Even in the standard
setting we learn that the fundamental property of the uniform
distribution which leads to optimality in~the two-player Seel--Strack
contest is that it has a decreasing density, whereas in the context of
the all-pay auction it is the fact that the uniform distribution has a
density which takes values in $\{0,1\}$.

\citet{SeelStrack2013} solved their problem by proposing a candidate
value function for the problem, and then verifying that this candidate
is a martingale under an optimal stopping rule for each agent. We
solve the problem in a different way using a
Lagrangian approach.
Our solution methods allow us to characterise the target laws which
correspond to a candidate equilibrium quite easily, but some effort is
required to show that there is a measure with these characteristics,
and that for a given initial law, this measure is unique.

The paper is structured as follows. In Section~\ref{secModel}, we
introduce the mathematical model of the contest.
We will see that a Nash equilibrium is identified with a pair of
probability measures. In Section~\ref{secCondiSingl}, we state the
main theorems which characterise
the unique symmetric Nash equilibrium and give some examples.
A proof of the characterisation theorem is given in Section~\ref
{secsufficiency}.
Following some preliminary results in Section~\ref{secPreliminary},
which are potentially of independent interest, Section~\ref{secExistence}
includes an explicit construction of the symmetric Nash equilibrium
in the case where the starting random variable takes only a finite
number of distinct
values, and then an extension of the existence result to general measures.
Uniqueness of the equilibrium is proved in Section~\ref{secUnique}.


\section{The model}
\label{secModel}

Consider a contest between two agents. Agent $i \in\{1,2\}$
privately observes the continuous-time realisation of a Brownian motion
$X^{i}=(X_{t}^{i})_{t\in\mathbb{R}^{+}}$ absorbed at zero, where the
processes $X^{i}$ are independent. Seel and Strack assume $X^i_0 = x_0$
for some positive real number $x_0$ which does not depend on~$i$. The
innovation in this paper is that we
assume $X_{0}^{i} \sim\mu$, where $\mu$ is the law of a nonnegative
random variable with finite mean $\overline{\mu} \in(0,\infty)$. The
values of $(X^i_0)_{i=1,2}$ are assumed to be independent draws from
the law $\mu$.

Let $\mathcal{F}_{t}^{i}=\sigma(\{X_{s}^{i}\dvtx s \leq t\})$ and set
$\mathbb{F}^{i}=(\mathcal{F}_{t}^{i})_{t\geq0}$.
A strategy of agent $i$ is a \mbox{$\mathbb{F}^{i}$-}stopping time $\tau^{i}$.
Since zero is absorbing for $X^{i}$, without loss of generality we
may restrict attention to $\tau^{i}\leq H_{0}^{i}=\inf\{t\geq
0\dvtx X_{t}^{i}=0\}$.
Both the process $X^{i}$ and the stopping time $\tau^{i}$ are private
information to agent $i$. That is, $X^{i}$ and $\tau^{i}$ cannot
be observed by the other agent.

The agent who stops at the highest value wins a prize, which we normalise
to one without loss of generality. In the case of a tie in which both
agents stop at the equal highest value, we assume that each of them
wins $\theta$, where $\theta\in[0,1)$. Therefore, player $i$ with
stopping value $X_{\tau^{i}}^{i}$ receives payoff
\[
\mathbf{1}_{\{X_{\tau^{i}}^{i} > X_{\tau^{3-i}}^{3-i}\}}+\theta\mathbf
{1}_{\{X_{\tau^{i}}^{i}=X_{\tau^{3-i}}^{3-i}\}}.
\]

\citet{SeelStrack2013} observed that since the payoffs to the agents
only depend upon $\tau^{i}$ via the distribution of $X_{\tau^{i}}^{i}$,
the problem of choosing the
optimal stopping time $\tau^{i}$ can be reduced to a problem of finding
the optimal distribution of $X_{\tau^{i}}^{i}$ or equivalently an
optimal target law.
Once we have found the optimal target law $\nu^{i}$, the remaining
work is to verify that there exists $\tau^{i}$ such\vspace*{2pt} that $X_{\tau
^{i}}^{i}\sim\nu^{i}$.
This is the classical Skorokhod
embedding problem [\citet{Skorokhod65}] for which solutions are
well known
[see \citet{Obloj2004} and \citet{Hobson2011} for surveys].
Note that there are
typically multiple solutions to the embedding problem for a given
target law, and
any of these solutions can be used to construct an optimal stopping rule.
We will identify a solution with the distribution of $X_{\tau
^{i}}^{i}$, rather than with
the stopping rule itself, so that when we talk about a unique
equilibrium the
uniqueness will refer to the target distribution and not to the
stopping rule.

We now introduce some notation which will be used throughout the paper.
Let $\sM$ be the set of integrable measures on $\R^+=[0,\infty)$ with
finite total mass, and let $\sP$ be the subset of $\sM$ consisting of
integrable probability measures on $\R^+$. Let $\delta_x \in\sP$ be
the unit mass at $x$ and let $\uU_x \in\sP$ be the uniform
distribution on $[0,2x]$ (with mean $x$).

For
$\varpi\in\sM$, let $\overline{\varpi} = \int_0^\infty x \varpi(dx)$,
and define the right-continuous distribution function $F_\varpi\dvtx
[0,\infty) \mapsto[0,\varpi(\R^+)]$ by
$ F_{\varpi}(x)=\varpi([0,x])$.
In the sequel, we occasionally want to consider $F_{\varpi}$ as a
function on $(-\infty,\infty)$ in which case we set $F_\varpi(x)=0$
for $x<0$.
Define also the call and put price functions $C_\varpi\dvtx[0,\infty)
\mapsto[0,\overline{\varpi}]$ and $P_{\varpi}\dvtx[0,\infty) \mapsto
[0,\infty)$ by
\begin{eqnarray*}
C_{\varpi}(x) & = & \int_{x}^{\infty}(y-x)
\varpi(dy) = \int_x^\infty\bigl( \varpi\bigl(\R^+
\bigr) - F_\varpi(y) \bigr) \,dy,
\\
P_{\varpi}(x) & = & \int_{0}^{x}(x-y)
\varpi(dy) = \int_0^x F_\varpi(y) \,dy.
\end{eqnarray*}
Note that $C_{\varpi}(x)-P_{\varpi}(x)=\int_{0}^{\infty}y\varpi
(dy)-x\int_{0}^{\infty}\varpi(dy) = \overline{\varpi}-x \varpi(\R^+)$
and if $\chi\in\sM$ and if $\varpi(\R^+) = \chi(\R^+)$ then
%
\begin{eqnarray}\label{eqnPcond}
&& \lim_{x\uparrow\infty} \bigl(P_{\varpi}(x)-P_{\chi}(x)
\bigr) \nonumber
\\
&&\qquad  =\lim_{x\uparrow
\infty} \bigl(P_{\varpi}(x)-x\varpi\bigl(
\R^{+} \bigr) \bigr)-\lim_{x\uparrow\infty
} \bigl(P_{\chi
}(x)-x
\chi\bigl(\R^{+} \bigr) \bigr)
\\
&&\qquad  =\overline{\chi}-\overline{\varpi}.\nonumber
\end{eqnarray}
Then $\chi$ is less than or equal to $\varpi$ in convex order (written
$\chi\preceq_{\mathrm{cx}} \varpi$) if and only if $\chi(\R^+) = \varpi(\R^+)$,
$\overline{\chi} = \overline{\varpi}$ and $C_{\chi}(x)\leq
C_{\varpi
}(x)$ for all $x\geq0$. This last condition can be rewritten in terms
of puts as $P_{\chi}(x)\leq P_{\varpi}(x)$ for all $x\geq0$.

Note that if $\chi\preceq_{\mathrm{cx}} \varpi$, then $C_\chi(0)=C_{\varpi
}(0)$ and $\chi(\{0\}) = F_{\chi}(0) = \chi(\R^+) + C_\chi'(0+)
\leq
\varpi(\R^+) + C'_\varpi(0+) = F_{\varpi}(0) = \varpi( \{0\})$.

Suppose $\chi, \varpi\in\sP$, then it is well known [see, e.g.,
\citet{ChaconWalsh1976}] that for Brownian motion $X$ with $X_0
\sim\chi$, there exists a stopping time $\tau$ for which $(X_{t
\wedge
\tau})_{t \geq0}$ is uniformly integrable and $X_\tau\sim\varpi$ if
and only if $\chi\preceq_{\mathrm{cx}} \varpi$.
In our context, we do not necessarily want to insist on uniform
integrability, but rather that the stopping time occurs before the
first hit of $X$ on zero.

\begin{lemma} Suppose $\chi, \varpi\in\sP$.
Let $X$ be a Brownian motion absorbed at zero with initial law $\chi$.
Then there exists a stopping time $\tau$ with $X_\tau\sim\varpi$ if
and only if $P_{\chi}(x) \leq P_{\varpi}(x)$ for all $x \geq0$.
\end{lemma}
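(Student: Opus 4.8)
The plan is to prove the two implications separately; necessity is routine and essentially all of the difficulty is in sufficiency.

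\emph{Necessity.} Suppose there is a stopping time $\tau\le H_0$ with $X_\tau\sim\varpi$. Then $(X_{t\wedge\tau})_{t\ge0}$ is Brownian motion stopped before absorption at $0$, hence a non-negative local martingale and therefore a non-negative supermartingale. Fix $K\ge0$ and put $\phi(x)=(K-x)^+$, which is convex and non-increasing. For $s\le t$, conditional Jensen together with the supermartingale property gives $\E[\phi(X_{t\wedge\tau})\mid\mathcal F_s]\ge\phi\big(\E[X_{t\wedge\tau}\mid\mathcal F_s]\big)\ge\phi(X_{s\wedge\tau})$, so $\big(\phi(X_{t\wedge\tau})\big)_{t\ge0}$ is a submartingale bounded by $K$. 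Hence $t\mapsto\E[\phi(X_{t\wedge\tau})]$ is non-decreasing, equals $\E[\phi(X_0)]=P_\chi(K)$ at $t=0$, and --- since $H_0<\infty$ a.s.\ and $0$ is absorbing, so that $X_{t\wedge\tau}\to X_\tau$ a.s.\ --- converges by bounded convergence to $\E[\phi(X_\tau)]=P_\varpi(K)$. Therefore $P_\chi(K)\le P_\varpi(K)$.

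\emph{Sufficiency.} Assume $P_\chi\le P_\varpi$ on $[0,\infty)$. By \eqref{eqn:Pcond} (both measures have unit mass) this forces $\overline\varpi\le\overline\chi$, and since $P_\chi(0)=P_\varpi(0)=0$ a comparison of right-derivatives at $0$ gives $\varpi(\{0\})\ge\chi(\{0\})$. Case 1: $\overline\varpi=\overline\chi$. Then $P_\chi\le P_\varpi$ with equal masses and means is exactly $\chi\preceq_{cx}\varpi$, so I would invoke the Chacon--Walsh construction to embed $\varpi$; the extra observation needed is that, because $\chi$ and $\varpi$ are carried by $[0,\infty)$ --- so that $C_\chi$ and $C_\varpi$ agree on $(-\infty,0]$ --- the construction may be realised using only elementary operations ``run $X$ until it exits $[a_n,b_n]$'' with $0\le a_n<b_n$, whence the limiting $\tau$ satisfies $\tau\le H_0$. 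Case 2: $m:=\overline\chi-\overline\varpi>0$. Now no embedding can be uniformly integrable, and the deficit $m$ must be ``lost at infinity in time''. I would handle this by an augmented Chacon--Walsh iteration on $[0,\infty)$ which, besides the usual mean-preserving chord moves, also permits moves that transport an atom of mass $q$ from a level $\ell$ down to $0$; such a move raises the put function (by $qK$ for $K\le\ell$ and by $q\ell$ for $K\ge\ell$) while lowering the mean by $q\ell$, and, using $P_\chi(0)=P_\varpi(0)=0$, $\varpi(\{0\})\ge\chi(\{0\})$ and $P_\varpi(K)-P_\chi(K)\to m$, one can arrange countably many such operations so that the put functions increase to $P_\varpi$ and the associated stopping times increase to a limit $\tau\le H_0$ embedding $\varpi$.

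\emph{Main obstacle.} Essentially all of the work lies in Case 2: one must show the iteration really converges to the prescribed law --- that the put functions converge to $P_\varpi$ at every $K$ (the delicate regions being near $0$, where the atom of $\varpi$ is assembled, and near $\infty$, where $m$ is absorbed), that the increasing limit of the stopping times is a genuine stopping time dominated by $H_0$, and that the limiting law is exactly $\varpi$ with no mass created or destroyed. I would organise and control this limit via the potential-theoretic reformulation of the hypothesis: writing $G(x,y)=2(x\wedge y)$ for the Green's function of Brownian motion absorbed at $0$, one has $G\varpi(x)=2\big(x-P_\varpi(x)\big)$, so $P_\chi\le P_\varpi$ is precisely $G\varpi\le G\chi$, and the embedding may then be produced, and its convergence justified, by Rost's filling scheme for the transient process $X$ with target $\varpi$.
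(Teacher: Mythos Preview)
Your proposal is correct and, at bottom, follows the same two-step plan as the paper: necessity via the conditional Jensen/supermartingale argument, and sufficiency by appeal to Rost. The paper's proof is deliberately terse --- three lines of Jensen for the forward direction and a one-line citation of \citet{Rost:71} for the converse --- whereas you expand the sufficiency direction considerably: you separate the mean-preserving case (handled directly by Chacon--Walsh, with the nice observation that non-negativity of the support keeps all the exit intervals inside $[0,\infty)$ and hence $\tau\le H_0$) from the mean-decreasing case, and for the latter you sketch an augmented iteration that also allows ``run to zero'' moves. Your potential-theoretic translation $G\varpi(x)=2(x-P_\varpi(x))$, which turns $P_\chi\le P_\varpi$ into the excessive-majorant condition $G\varpi\le G\chi$, is exactly the hypothesis of Rost's theorem for the transient process, and is a genuinely helpful addition that the paper omits. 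That said, as you yourself flag under ``Main obstacle'', the convergence of the Case~2 construction is not proved independently but is ultimately delegated to Rost's filling scheme --- so your argument is not a self-contained alternative to Rost but rather an illuminating unpacking of what the citation is doing. In short: same approach, more detail; the extra exposition is valuable but does not change the logical dependence on Rost.
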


\begin{pf}
Since $(X_{t \wedge H_0})_{t \geq0}$ is a nonnegative
supermartingale, by a conditional version of Jensen's inequality,
\[
\E\bigl[ (x - X_{\tau})^+ \mid{\mathcal F}_0 \bigr] \geq
\bigl(x - \E[X_\tau\mid{\mathcal F}_0] \bigr)^+ \geq(x -
X_0)^+
\]
and then
\[
P_{\tau}(x) = \E\bigl[(x - X_{\tau})^+ \bigr] = \E\bigl[\E
\bigl[ (x - X_{\tau})^+ \mid{\mathcal F}_0 \bigr] \bigr] \geq
\E\bigl[(x - X_0)^+ \bigr] = P_{\chi}(x).
\]
Conversely, the existence follows from results of \citet{Rost71}.
\end{pf}

\begin{definition} Given $\mu\in\sM$,
we say that $\nu\in\sM$ is \emph{weakly admissible} (with respect to
$\mu$) if $\nu(\R^+) = \mu(\R^+)$ and $P_{\nu}(x) \geq P_{\mu
}(x)$ for
all $x \geq0$.

We say that $\nu$ is \emph{strongly admissible} (with respect to $\mu$)
if $\nu$ is weakly admissible and $\overline{\nu}= \overline{\mu}$.
\end{definition}

Note that if $\nu$ is weakly admissible then necessarily $\nu(\{0 \})
\geq\mu(\{0 \})$ and $\overline{\nu} \leq\overline{\mu}$ by
(\ref
{eqnPcond}). If $\nu$ is strongly admissible, then $\mu\preceq_{\mathrm{cx}}
\nu$.

These definitions are motivated by the fact that if $X$ is Brownian
motion with $X_0 \sim\mu\in\sP$ and if $\nu\in\sP$ is weakly
admissible with respect to $\mu$ then there exists $\tau\leq H_0:=
\inf\{ u>0\dvtx X_u = 0 \}$ such that $X_\tau\sim\nu$, and if $\nu$ is
strongly admissible then there exists $\tau$ such that $X_\tau\sim
\nu
$ and $(X_{t \wedge\tau})_{t \geq0}$ is uniformly integrable.

\begin{definition}
The pair of weakly admissible measures $(\nu^{1},\nu^{2})$ with $\nu^i
\in\sP$ forms a \emph{Nash equilibrium}
if, for each $i\in\{1,2\}$, given that the other agent $j=3-i$
uses a stopping rule $\tau^{j}$ such that $X_{\tau^{j}}^{j}\sim\nu^{j}$,
then any stopping rule $\tau^{i}$ such that $X_{\tau^{i}}^{i}\sim\nu^{i}$
is optimal.

If $(\nu,\nu)$ forms a Nash equilibrium, then the equilibrium is symmetric.
\end{definition}

A Nash equilibrium may be characterised as follows. Let $V^i_{\chi
,\varpi}$ denote the value of the game to player $i$ if Player 1 uses a
stopping rule which yields law $\chi$ for $X^1_{\tau^1}$ and Player 2
uses a stopping rule which yields law $\varpi$ for $X^2_{\tau^2}$.
It follows that
\begin{eqnarray*}
V^1_{\chi, \varpi} & = & \int_{[0,\infty)} \chi(dx)
\bigl\{ F_{\varpi}(x-) + \theta\bigl(F_{\varpi}(x) -
F_\varpi(x-) \bigr) \bigr\},
\\
V^2_{\chi, \varpi} & = & \int_{[0,\infty)} \varpi(dx)
\bigl\{ F_{\chi}(x-) + \theta\bigl(F_{\chi}(x) -
F_\chi(x-) \bigr) \bigr\}.
\end{eqnarray*}
Then $(\sigma,\nu)$ is a Nash equilibrium if
$V^1_{\sigma, \nu} = \sup_{\pi} V^1_{\pi,\nu}$ and $V^2_{\sigma,\nu} =
\sup_{\pi} V^2_{\sigma,\pi}$, where the suprema are taken over the set
of weakly admissible measures. A symmetric Nash equilibrium is a weakly
admissible measure $\nu$ such that $V^1_{\nu,\nu} = \sup_{\pi}
V^1_{\pi,\nu}$ and $V^2_{\nu, \nu} = \sup_{\pi} V^2_{\nu, \pi}$, where again
the supremum is taken over weakly admissible measures $\pi$. Note that
since $V^1_{\chi, \varpi} = V^2_{\varpi,\chi}$ the definition can be
simplified to a symmetric Nash equilibrium is a weakly admissible
measure $\nu$ such that $V^1_{\nu,\nu} = \sup_{\pi} V^1_{\pi,\nu
}$ from
which is follows that $\sup_{\pi} V^2_{\nu, \pi} = \sup_{\pi}
V^1_{\pi,\nu} = V^1_{\nu,\nu} = V^2_{\nu,\nu}$ and $\nu$ is also
optimal for
Player 2.

Our paper investigates the existence and
uniqueness of a symmetric Nash equilibrium. It seems natural that
a Nash equilibrium is symmetric, since the contest is symmetric in
the sense that each agent observes a martingale process started from
the same law $\mu$. Then simple arguments over rearranging mass can
be used to show that it is never optimal for two agents to put mass
at the same positive point~$x$, and further that any symmetric Nash
equilibrium must be strongly admissible. The proof of Theorem \ref
{thmAtomfree}
is in the \hyperref[secProveAtomfree]{Appendix}.

\begin{theorem} \label{thmAtomfree}
Suppose $(\nu,\nu)$ is a Nash equilibrium. Then $\nu$ is strongly
admissible, $F_{\nu}(x)$ is
continuous on $(0,\infty)$ and $F_{\nu}(0)=F_{\mu}(0)$.
\end{theorem}

\section{Main results and examples}\label{secCondiSingl}

In this section, we describe the main results concerning existence and
uniqueness of a symmetric Nash equilibrium
for the contest, and give examples.

\begin{definition}\label{defCondi}
Suppose $\mu\in\sM$.
Let $\mathcal{A}_{\mu}^{*} \subseteq\sM$ be the set of measures
$\nu$
satisfying:
\begin{longlist}[(iii)]
\item[(i)] $\nu(\R^+) = \mu(\R^+)$, $F_{\nu}(0)=F_{\mu}(0)$,
$F_{\nu}$
is continuous
on $(0,\infty)$, $\overline{\nu}=\overline{\mu}$,
and $C_{\nu}(x)\geq C_{\mu}(x)$ for all $x\geq0$;
\item[(ii)] $F_{\nu}(x)$ is concave on $[0,\infty)$;
\item[(iii)] if $C_{\nu}(x)>C_{\mu}(x)$ on some interval $\mathcal
{J}\subset[0,\infty)$
then $F_{\nu}(x)$ is linear on $\mathcal{J}$.
\end{longlist}
\end{definition}

The two main results in this article are a theorem which characterises
symmetric Nash equilibria, and a theorem which proves that a symmetric
Nash equilibrium exists and is unique.

\begin{theorem}
\label{thmcharacterisation} If $\mu\in\sP$, then $(\nu^{*},\nu
^{*})$ is
a symmetric Nash equilibrium for the problem if and only if $\nu
^{*}\in
\mathcal{A}_{\mu}^{*}$.
\end{theorem}

\begin{theorem}
\label{thmSingleton} For $\mu\in\sM$,
$\llvert\mathcal{A}_{\mu}^{*}\rrvert=1$. In particular, if $\mu\in
\sP$ then there
exists a unique symmetric Nash equilibrium for the problem.
\end{theorem}

Before proving Theorems \ref{thmcharacterisation} and \ref
{thmSingleton}, we present some examples.

\begin{example}\label{eguni}
Recall that $\uU_{x}=\mathcal{U}[0,2x]$, where $\mathcal{U}$ stands
for the continuous uniform distribution. Suppose $\mu\in\sP$ satisfies
$C_{\mu}\leq C_{\uU_{\overline{\mu}}}$. Then it is easy to see that
$\uU
_{\overline{\mu}}\in\mathcal{A}_{\mu}^{*}$,
and thus $(\uU_{\overline{\mu}},\uU_{\overline{\mu}})$ is the unique
symmetric Nash equilibrium
for the problem. In the case $\mu= \delta_0$, this is the \citet
{SeelStrack2013} result.
\end{example}

\begin{example}
Suppose $\mu\in\sP$ is atom-free, except perhaps for an atom at
zero. Set
$b_{\mu}=\sup\{ x\dvtx F_{\mu}(x)<1 \} $.
If $F_{\mu}$ is concave on $[0,b_{\mu}]$, then $\mu\in\mathcal
{A}_{\mu}^{*}$
and $(\mu,\mu)$ is the unique symmetric Nash equilibrium for the
problem. If $F_{\mu}$ is convex on $[0,b_{\mu}]$ and $F_{\mu}(0)=0$,
then it can be verified that $C_{\mu}\leq C_{\uU_{\overline{\mu}}}$
(see Proposition
\ref{propBound} for a detailed proof),
and thus $(\uU_{\overline{\mu}},\uU_{\overline{\mu}})$
is the unique symmetric Nash equilibrium
for the problem.
\end{example}

\begin{example}[(Beta distribution)] Suppose $\mu$ is a Beta distribution on $[0,1]$
with shape
parameters $\alpha=2$ and $\beta=3$, that is $\mu=\operatorname{Beta}(2,3)$.
Then, the mean of $\mu$ is $2/5$, $C_{\mu}(x)= (\frac
{3}{5}x^{5}-2x^{4}+2x^{3}-x+\frac{2}{5} )\mathbf{1}_{\{x\leq1\}}$,
and $F_{\mu}(x)=\min\{3x^{4}-8x^{3}+6x^{2},1\}$. Then $F_{\mu}(x)$
is convex on $(0,\frac{1}{3})$ and concave on $(\frac{1}{3},1)$,
or equivalently the density $f_\mu= F_{\mu}'$ is increasing on
$[0,1/3]$ and decreasing on $[1/3,1]$. Hence, $\mu$ itself is not a
candidate Nash equilibrium. Instead we expect to find a symmetric Nash
equilibrium $\nu$ with a constant density $f_\nu(x) = 2c_1$ on
$[0,c_2]$, and $f_\nu(x) = f_\mu(x)$ on $[c_2,1]$, where $c_1$ and
$c_2$ are constants to be determined. Since $\nu$ has constant density
$2c_1$ on $[0,c_2]$
it follows that $C_{\nu}(x)=c_{1}x^{2}-x+\frac{2}{5}$
on this interval. Moreover,
$c_{1}$ and $c_{2}$ satisfy $C_{\mu}(c_{2})=C_{\nu
}(c_2)=c_{1}c_{2}^{2}-c_{2}+\frac{2}{5}$
and $C_{\mu}'(c_{2})=C_\nu'(c_2)=2c_{1}c_{2}-1$. Solving the system of
equations,
we obtain $c_{1}=\frac{4\sqrt{10}+140}{243}$ and $c_{2}=\frac
{10-\sqrt{10}}{9}$.
For this choice of $c_1,c_2$ it follows that
$\nu\in\mathcal{A}_{\mu}^{*}$. Thus, $(\nu,\nu)$
is the unique symmetric Nash equilibrium for the problem.
\end{example}

\begin{example}[(Atomic measure)]
Suppose that $\mu=\frac{1}{2}\delta_{1-\varepsilon
}+\frac
{1}{2}\delta_{1+\varepsilon}$,
where $\varepsilon\in(0,1)$. Then
\[
C_{\mu}(x)=(1-x)\cdot\mathbf{1}_{x\in[0,1-\varepsilon)}+\tfrac
{1}{2}(1+
\varepsilon-x)\cdot\mathbf{1}_{x\in[1-\varepsilon,1+\varepsilon)}.
\]

Suppose $\varepsilon\in(0,1/2]$. Then $C_{\mu}\leq C_{\uU_1}$,
where $\uU_1=\mathcal{U}[0,2]$, and
then $(\uU_1,\uU_1)$ is the unique symmetric Nash equilibrium
for the problem.

Now suppose $\varepsilon\in(1/2,1)$. Define the function
$C$ by
\[
C(x)=\frac{x^{2}-8(1-\varepsilon)(x-1)}{8(1-\varepsilon)}\cdot\mathbf
{1}_{x\in
[0,2(1-\varepsilon))}+\frac{x^{2}-8\varepsilon x+16\varepsilon
^{2}}{8(3\varepsilon
-1)}\cdot
\mathbf{1}_{x\in[2(1-\varepsilon),4\varepsilon)},
\]
and let $\nu$ be given by $C_\nu(x)=C(x)$.
Then $\nu\in\mathcal{A}_{\mu}^{*}$.
Hence, $(\nu,\nu)$ is the unique symmetric Nash equilibrium for the
problem if $\varepsilon\in(1/2,1)$.
\end{example}

\section{Sufficiency}\label{secsufficiency}

\mbox{}
\begin{pf*}{Proof of reverse implication of Theorem \ref{thmcharacterisation}}
We show that if $\mu\in\sP$ and $\nu\in\sA^*_\mu$ then $(\nu,
\nu)$
is a symmetric Nash equilibrium. The statement that if $(\nu,\nu)$ is a
symmetric Nash equilibrium then $\nu\in\sA^*_\mu$ is given in the \hyperref[secProveAtomfree]{Appendix}.\vspace*{1pt}

Given $\mu\in\sP$, define the classes of measures
$\sA^w_\mu= \{ \nu\in\sM$, $\nu$ weakly admissible with respect to
$\mu\}$ and
$\sA^s_\mu= \{ \nu\in\sM$, $\nu$ strongly admissible with respect
to~$\mu\}$.
Recall also the definition of $\sA^*_\mu$ in (\ref{defCondi}).

Using the properties of Theorem~\ref{thmAtomfree} we find that a
symmetric Nash equilibrium is identified with a measure $\nu^{*}\in
\sA
^s_{\mu}$
with the property that, for any $\nu\in\sA^w_{\mu}$, $V^1_{\nu
^*,\nu^*}
\geq V^1_{\nu,\nu^*}$. Since again by Theorem~\ref{thmAtomfree}
we have that $\nu^*$ has no atoms on $(0,\infty)$, for a symmetric Nash
equilibrium we must have that for any $\nu\in\sA^w_\mu$
%
\begin{eqnarray}\label{eqDefnNash}
&& \int_{(0,\infty)}F_{\nu^{*}}(x)\nu^{*}(dx)+\theta
F_{\nu
^{*}}(0)F_{\nu
^{*}}(0)
\nonumber\\[-8pt]\\[-8pt]\nonumber
&&\qquad \geq\int_{(0,\infty)}F_{\nu^{*}}(x)\nu(dx)+  \theta
F_{\nu
^{*}}(0)F_{\nu}(0).
\end{eqnarray}

Fix $\nu^*\in\sA_{\mu}^{*}$ and suppose $\nu\in\sM$.
Suppose that $\lambda$, $\gamma$ and $\zeta$ are finite
constants and $\eta$ is a measure on $(0,\infty)$, and suppose
$\lambda
$ and $\zeta$ are nonnegative.
Define
%
\begin{eqnarray}
&& \mathcal{L}_{\nu^{*}}(\nu;\lambda,\gamma,\zeta,\eta)
\nonumber
\\
& & \qquad=\int_{(0,\infty)}F_{\nu^{*}}(x)\nu(dx)+\theta
F_{\nu
^{*}}(0)F_{\nu}(0)+\lambda\biggl(\overline{\mu}-\int
_{(0,\infty
)}x\nu(dx) \biggr)
\nonumber
\\
&&\quad\qquad{}+\gamma\biggl(1-\int_{(0,\infty)}
\nu(dx)-F_{\nu
}(0) \biggr)+\zeta\bigl(F_{\nu}(0)-F_{\mu}(0)
\bigr)
\nonumber
\\
&&\quad\qquad{}+\int_{(0,\infty)} \bigl(P_{\nu}(z)-P_{\mu
}(z)
\bigr)\eta(dz)\label{eqOptimLag1}
\\
&&  \qquad=\int_{(0,\infty)} \biggl(F_{\nu^{*}}(x)-\lambda x-
\gamma+\int_{(x,\infty)}(z-x)\eta(dz) \biggr)\nu(dx)
\nonumber
\\
&&\quad\qquad{}+ \biggl(\theta F_{\nu^{*}}(0)-\gamma+\zeta+\int
_{(0,\infty)}z\eta(dz) \biggr)F_{\nu}(0)
\nonumber
\\
&&\quad\qquad{}+\lambda\overline{\mu}+\gamma-\int_{(0,\infty
)}P_{\mu
}(z)
\eta(dz)-\zeta F_{\mu}(0),\label{eqOptimLag2}
\end{eqnarray}
where we use
\begin{eqnarray*}
\int_{(0,\infty)}P_{\nu}(z)\eta(dz) & = & \int
_{(0,\infty)} \eta(dz) \biggl[ z \nu\bigl( \{0\} \bigr) + \int
_{(0,z)} (z-x) \nu(dx) \biggr]
\\
& = & F_{\nu}(0)\int_{(0,\infty)} z \eta(dz) + \int
_{(0,\infty)} \nu(dx) \int_{(x,\infty)} (z-x) \eta(dz).
\end{eqnarray*}
Equivalently, we have
\begin{eqnarray}
&& \int_{(0,\infty)}  F_{\nu^{*}}(x)\nu(dx)+\theta
F_{\nu
^{*}}(0)F_{\nu
}(0)
\nonumber
\\
&&\qquad  =\mathcal{L}_{\nu^{*}}(\nu;\lambda,\gamma,\zeta,\eta)-\lambda(
\overline{\mu}-\overline{\nu})- \gamma\bigl(1 - \nu\bigl(\R^+ \bigr)
\bigr)-
\zeta\bigl(F_{\nu
}(0)-F_{\mu}(0) \bigr)
\nonumber
\\
&&\quad\qquad{}-\int_{(0,\infty)} \bigl(P_{\nu}(z)-P_{\mu}(z)
\bigr)\eta(dz).
\nonumber
\end{eqnarray}
Now suppose $\nu\in\sA^w_\mu$. Then since $\lambda\geq0$, $\zeta
\geq
0$ and $\eta(dz)\geq0$ and since $\nu\in\sA^w_\mu$ implies that
$\nu
(\R^+)= \mu(\R^+)=1$, $F_{\nu}(0)\geq F_{\mu}(0)$
and $P_{\nu}(z)\geq P_{\mu}(z)$ $\forall z\geq0$, we find
%
\begin{equation}
\int_{(0,\infty)} F_{\nu^{*}}(x)\nu(dx)+\theta
F_{\nu
^{*}}(0)F_{\nu}(0) \leq\mathcal{L}_{\nu^{*}}(\nu;
\lambda,\gamma,\zeta,\eta).\label
{eqpayLeqLag}
\end{equation}
Furthermore, if $\eta$ is such that $\eta(\mathcal{J})=0$ for every
interval $\mathcal{J}$ for which $P_{\nu^*}(z) > P_{\mu}(z)$ on
$\mathcal{J}$,
then since $\nu^*$ has unit mass and mean $\overline{\mu}$ and since
$F_{\nu^{*}}(0)=F_{\mu}(0)$,
%
\begin{equation}
\int_{(0,\infty)}F_{\nu^{*}}(x)\nu^{*}(dx)+\theta
F_{\nu
^{*}}(0)F_{\nu
^{*}}(0)=\mathcal{L}_{\nu^{*}} \bigl(
\nu^{*};\lambda,\gamma,\zeta,\eta\bigr).\label
{eqpayEqLag}
\end{equation}

Define $b^{*}=\sup\{ x\dvtx F_{\nu^{*}}(x)<1 \} $ so that
$b^{*}\leq\infty$.
Since $F_{\nu^{*}}$ is continuous and concave, it is absolutely
continuous, which means that there exists a function $f_{\nu^{*}}$
such that $F_{\nu^{*}}(x)=\int_{0}^{x}f_{\nu^{*}}(y)\,dy+F_{\nu^{*}}(0)$.
By concavity of $F_{\nu^*}$, $f_{\nu^*}$ is monotonic and we may take
it to be right-continuous. Then $f_{\nu^{*}}(x)=\int_{(x,b^{*}]}\psi
(dz)$, where the measure $\psi$
is given by $\psi((z_{1},z_{2}])=f_{\nu^{*}}(z_{1})-f_{\nu^{*}}(z_{2})$
for any $z_{1}<z_{2}$.

Let $\lambda^{*}=0$,
$\gamma^{*}=1$, $\zeta^{*}=(1-\theta)F_{\nu^{*}}(0)$
and $\eta^{*}=\psi$. Then $\int_{(0,\infty)} z \eta^*(dz) = 1 -
F_{\nu^*}(0)$.
Since $\nu^* \in\sA^*_\mu$ we have that\vspace*{1pt} if $\eta^*$ places mass on
every neighbourhood of $x$ then $P_{\nu^*}(x) = P_\mu(x)$.

Define $\Gamma$ on $[0,\infty)$ by $\Gamma(x)=\lambda^{*}x+\gamma
^{*}-\int_{(x,\infty)}(z-x)\eta^{*}(dz)$. Then, for any $x>0$,
\[
\Gamma(x)= 1 - \int_{(x,\infty)} \psi(dz) \int_x^z dy = 1 -\int_x^\infty dy\, f_{\nu^*}(y) =
F_{\nu^*}(x).
\]
Observe
that $\theta F_{\nu^{*}}(0)-\gamma^{*}+\zeta^{*} + \int_{(0,\infty
)} z
\eta^*(dz)=0$. Thus, by (\ref{eqpayLeqLag})
and (\ref{eqOptimLag2}), for $\nu\in\sA^w_\nu$,
%
\begin{eqnarray}\label{eq6a}
&& \int_{(0,\infty)}F_{\nu^{*}}(x)\nu(dx)+\theta
F_{\nu^{*}}(0)F_{\nu}(0)
\nonumber\\[-8pt]\\[-8pt]\nonumber
&&\qquad \leq\lambda^{*}\overline{\mu}+\gamma^{*}  -\int
_{(0,\infty
)}P_{\mu }(z)\eta^{*}(dz)-
\zeta^{*}F_{\mu}(0),
\end{eqnarray}
and by (\ref{eqpayEqLag}) and (\ref{eqOptimLag2}),
%
\begin{eqnarray}\label{eq6b}
&& \int_{(0,\infty)}F_{\nu^{*}}(x)\nu^{*}(dx)+\theta
F_{\nu
^{*}}(0)F_{\nu
^{*}}(0)
\nonumber\\[-8pt]\\[-8pt]\nonumber
&&\qquad =\lambda^{*}\overline{\mu}+\gamma^{*}  -\int
_{(0,\infty)}P_{\mu
}(z)\eta^{*}(dz)-
\zeta^{*}F_{\mu}(0).
\end{eqnarray}
Note $\int_{(0,\infty)}P_{\mu}(z)\eta^{*}(dz)=\int_{(0,b^{*}]}P_{\nu
^*}(z)\psi(dz) = \int_{(0,b^*]} f_{\nu^*}(z) F_{\nu^*}(z) \,dz = (1 -
F_{\nu^*}(0)^2)/2 < 1$
so that the right-hand side of (\ref{eq6a}) and (\ref{eq6b}) is
well defined and positive. Furthermore, for any $\nu\in\sA^w_{\mu}$,
\[
\int_{(0,\infty)}F_{\nu^{*}}(x)\nu(dx)+\theta
F_{\nu^{*}}(0)F_{\nu
}(0)\leq\int_{(0,\infty)}F_{\nu^{*}}(x)
\nu^{*}(dx)+\theta F_{\nu
^{*}}(0)F_{\nu^{*}}(0).
\]
Hence, $(\nu^{*},\nu^{*})$ is a symmetric Nash equilibrium for the problem.
\end{pf*}

\begin{remark}
Substituting for the values of the optimal Lagrange multipliers, we find
\begin{eqnarray*}
V_{\nu^{*},\nu^{*}}^{1} & = & \int_{(0,\infty)}F_{\nu^{*}}(x)
\nu^{*}(dx)+\theta F_{\nu^{*}}(0)F_{\nu^{*}}(0)
\\
& = & 1-(1-\theta)F_{\nu^{*}}(0)F_{\mu}(0)-\frac{1}{2}
\bigl(1-F_{\nu
^{*}}(0)^{2} \bigr)
\\
& = & \frac{1}{2}+ \biggl(\theta-\frac{1}{2} \biggr)F_{\mu}(0)^{2}.
\end{eqnarray*}
Note that this is as expected, since for any law $\pi$ with no atom on
$(0,\infty)$ we have by symmetry that $V^1_{\pi,\pi} = (1 - \pi(\{
0\}
)^2)/2 + \theta\pi(\{0\})^2$.
\end{remark}

\section{Preliminaries} \label{secPreliminary}

In the previous section, we characterised the symmetric Nash equilibria.
In this section, we state and prove some auxiliary results which will
be required for proofs of existence and uniqueness in later sections.
The first result is of independent interest.
Note that $F_{\nu^*}$ is a concave function on $[0,\infty)$, but the
case we will want in the following theorem is for convex distribution functions.

\begin{proposition} \label{propBound}
Fix $y \in(0,\infty)$.
Let $\sP(y) \subseteq\sP$ be the set of probability measures $\pi$ on
$\R^+$
with mean $\overline{\pi} = y$. For $\pi\in\sP$, recall that
$F_\pi$
is the distribution
function of $\pi$, and extend this definition to $(-\infty,\infty)$.
Let $a_{\pi}=\inf\{u\dvtx F_\pi(u)>0\}\geq0$
and $b_\pi=\sup\{u\dvtx F_\pi(u)<1\}\leq\infty$.
\begin{enumerate}[(iii)]
\item[(i)] Let $\sP_{\mathrm{cx}}({y})= \{ \pi\in\sP({y})\dvtx F_\pi\mbox{ is convex
on } (-\infty,b_{\pi} ] \} $.
Then, for $\pi\in\sP_{\mathrm{cx}}(y)$, $\pi(\{0\})=0$. Moreover:
\begin{longlist}[(a)]
\item[(a)] suppose $H$ is convex; then,
\[
\sup_{\pi\in\sP_{\mathrm{cx}}(y)}\int\pi(dz)H(z)=\int_{0}^{2{y}}
\frac
{1}{2{y}}H(v)\,dv;
\]
\item[(b)] suppose $H$ is concave; then, $\sup_{\pi\in\sP_{\mathrm{cx}}(y)}\int
\pi (dz) H(z)=H({y})$.
\end{longlist}

\item[(ii)] Let $\sP_{\mathrm{cv}}({y})= \{ \pi\in\sP({y})\dvtx F_\pi\mbox{ is concave
on } [0,\infty) \} $.
Then, for $\pi\in\sP_{\mathrm{cv}}(y)$, $a_\pi=0$. Moreover:
\begin{longlist}[(a)]
\item[(a)] suppose $H$ is convex; then, $\sup_{\pi\in\sP_{\mathrm{cv}}(y)}\int
\pi(dz) H(z) = H(0) + y \lim_{x \uparrow\infty} \frac{H(x)}{x}$;

\item[(b)] suppose $H$ is concave; then
\[
\sup_{\pi\in\sP_{\mathrm{cv}}(y)}\int\pi(dz)H(z)=\int_{0}^{2{y}}
\frac
{1}{2{y}}H(v)\,dv.
\]
\end{longlist}
\end{enumerate}

The suprema in \textup{(i)(a)} and \textup{(ii)(b)} are
attained by $\pi\sim\mathcal{U}[0,2{y}]$. The supremum in \textup{(i)(b)} is
attained by $\pi\sim\delta_{{y}}$. The suprema in \textup{(i)(b)} and \textup{(ii)(a)}\vadjust{\goodbreak}
are valid for all distributions on $\mathbb{R}^{+}$ with mean ${y}$
and not just those with convex (or concave) distribution functions.
\end{proposition}

\begin{remark}
This result is stated for completeness; the result we will use and
prove is (i)(a).
\end{remark}

\begin{pf*}{Proof of Proposition \ref{propBound}}
Let $U$ be a $\mathcal{U}[0,1]$ random variable.

\begin{figure}[b]

\includegraphics{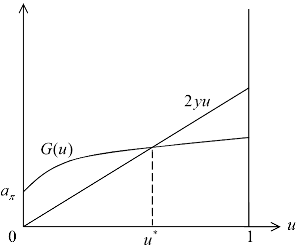}

\caption{Comparison of $G(u)$ and $2{y}u$. Since $G$ is the inverse
of the CDF of a random variable with mean $y$, the area under $G$
is ${y}$. Then the areas under $G$ and the line $\ell(u)=2{y}u$
are the same. Hence, if $G$ is concave on $[0,1]$ (and not a straight
line passing through the origin), there is
a unique crossing point $u^{*}$ of $G$ and the line $\ell$.}
\label{figBound}
\end{figure}

Suppose $\pi\in\sP_{\mathrm{cx}}(y)$ and let $Y$ have law $\pi$. It is obvious
that $F=F_\pi$, is strictly increasing on $(a_\pi,b_\pi)$,
and $F(a_\pi)=0$. Hence, the inverse function $G\triangleq F^{-1}$
exists on $[0,1]$. Since $G(U)$ is distributed as $Y$,
$\mathbb{E}[H(Y)]=\mathbb{E}[H(G(U))]=\int_{0}^{1}H(G(u))\,du$ and
$\int_{0}^{1}G(u)\,du=\mathbb{E}[G(U)]=\mathbb{E}[Y]={y}$.

It is clear that $G$ is concave on $[0,1]$ and $G(0)= a_\pi\geq0$.
Since $\int_{0}^{1}2{y}u\,du=\int_{0}^{1}G(u)\,du$, then either $G(u)=2 y
u$ and $Y\sim2 y U$, or there exists
a unique $u^{*}\in(0,1)$ such that $G(u^{*})=2{y}u^{*}$ (see
Figure~\ref{figBound}). In the latter case, $2{y}u\leq G(u)\leq2{y}u^{*}$
for $u\in[0,u^{*}]$ and $2{y}u^{*}\leq G(u)\leq2{y}u$ for
$u\in[u^{*},1]$. Then if $D \triangleq\mathbb{E}[H(Y)]-\mathbb
{E}[H(2{y}U)]$ we have
\[
D =\int_{0}^{u^{*}} \bigl(H \bigl(G(u)
\bigr)-H(2{y}u) \bigr)\,du+\int_{u^{*}}^{1} \bigl(H
\bigl(G(u) \bigr)-H(2{y}u) \bigr)\,du.
\]
Let $H_{-}'$ denote the left derivative of the convex function
$H$. Then $H(u_{2})-H(u_{1})\leq(u_{2}-u_{1})H_{-}'(u_{2})$ for
any $u_{1}$ and $u_{2}$, and setting $u_1 = 2 {y} u$ and $u_2 = G(u)$
and using the fact that
$H_-'$ is increasing,
\begin{eqnarray*}
D & \leq& \int_{0}^{u^{*}} \bigl(G(u)-2{y}u
\bigr)H_{-}' \bigl(G(u) \bigr)\,du+\int
_{u^{*}}^{1} \bigl(G(u)-2{y}u \bigr)H_{-}'
\bigl(G(u) \bigr)\,du
\nonumber
\\
& \leq& H_{-}' \bigl(2{y}u^{*} \bigr)\int
_{0}^{1} \bigl(G(u)-2{y}u \bigr)\,du=0.\label{eqBound2yZ}
\end{eqnarray*}
Thus, $\mathbb{E}[H(Y)]\leq\mathbb{E}[H(2{y}U)]=\int_{0}^{2{y}}\frac
{1}{2{y}}H(u)\,du$.
Moreover, it is obvious that the bound is attained by $Y\sim\mathcal
{U}[0,2{y}]$.
\end{pf*}

\begin{proposition} \label{propIncrDSlope}
Suppose that $H$ is such that $H$ is twice differentiable and $h=H'$ is concave.
Suppose that $H(0)=0$, $h(0)>0$, $h'(0)\leq0$ and $h$ is not
constant. Then, for $\hat{w}>0$ such that $H(\hat{w})=0$, we
have $h(\hat{w})+h(0)\leq0$, \textit{that is}, $\llvert h(\hat
{w})\rrvert\geq h(0)$.
\end{proposition}

\begin{pf}
Since $h$ is not constant and $h$ is concave, there is a solution
$w=\tilde{w}$ say, to $h(w)=-h(0)$. Let $\delta=-2h(0)/\tilde{w}$ be the
slope of the line joining $(0,h(0))$ to $(\tilde{w},-h(0))$. Then,
on $(0,\tilde{w})$, $h(w)\geq h(0)+\delta w$ and $H(w)=\int
_{0}^{w}h(x)\,dx\geq h(0)w+\delta w^{2}/2$,
so that $H(\tilde{w})\geq0$. Then, by concavity of $H$ and since
$H(\hat{w})=0$, $\hat{w}\geq\tilde{w}$. Thus, $h(\hat{w})\leq
h(\tilde
{w})=-h(0)$
and the result follows.
\end{pf}

\begin{proposition} \label{propSmooth}
For any measure $\nu\in\mathcal{A}_{\mu}^{*}$, if $C_{\nu
}(x)=C_{\mu}(x)$
for some \mbox{$x>0$}, then $C_{\mu}(\cdot)$ is differentiable at $x$
and $C_{\nu}'(x)=C_{\mu}'(x)$.
\end{proposition}

\begin{pf}
Suppose that $\nu\in\mathcal{A}_{\mu}^{*}$. Then $C_{\nu}$ is continuously
differentiable on $(0,\infty)$ and $C_{\nu}\geq C_{\mu}$. Since
$C_{\nu
}(y)-C_{\mu}(y)\geq0=C_{\nu}(x)-C_{\mu}(x)$
for any $y<x$, we deduce $C_{\nu}'(x-)-C_{\mu}'(x-)\leq0$.
Similarly, we have $C_{\nu}'(x+)-C_{\mu}'(x+)\geq0$. Thus,
$C_{\mu}'(x+)\leq C_{\nu}'(x+)=C_{\nu}'(x-)\leq C_{\mu}'(x-)$.
Conversely,\vspace*{1pt} convexity of $C_{\mu}$ implies $C_{\mu}'(x-)\leq C_{\mu
}'(x+)$, and hence $C_{\mu}'(x-)=C_{\mu}'(x+)$, and the results follow.
\end{pf}

\begin{proposition} \label{propQuadBound}
Fix any measure $\nu\in\mathcal{A}_{\mu}^{*}$. Suppose that $C_{\nu
}(x)=\phi(x)$
on some interval $\mathcal{J}=[j_{1},j_{2})$, where $0\leq j_{1}<j_{2}
\leq\infty$
and where $\phi(\cdot)$ is a quadratic function defined on $(-\infty
,\infty)$
with $\phi''>0$. Then $j_{2}<\infty$ and $C_{\nu}(x)\leq\phi(x)$
on $[j_{2},\infty)$.
\end{proposition}

\begin{pf}
Assume that $j_{2}=\infty$. Then $C_{\nu}$ is quadratic on
$[j_{1},\infty)$
with strictly positive quadratic coefficient. This means that $C_{\nu}$
is not ultimately decreasing, which is a contradiction. Thus,
$j_{2}<\infty$.
Since $C_{\nu}'$ is continuous and concave on $(0,\infty)$ and since
$\phi{'}$
is linear, it is clear that $C_{\nu}(x)\leq\phi(x)$ on
$[j_{2},\infty)$.
\end{pf}

\section{Existence of a Nash equilibrium}
\label{secExistence}

\subsection{Atomic initial measure}\label{secAtomicMeas}

We start with the case where the initial law $\mu$ is an atomic probability
measure. We will construct a put function $Q(x)$ that satisfies certain
conditions, and then define a measure $\nu$ via $\nu((-\infty,x])= Q'(x)$.
It will then follow that $\nu$ belongs to $\mathcal{A}_{\mu}^{*}$.

We state the theorem for the case of a measure $\chi$ as we will need
the more general result in subsequent sections. In the case of $X_0
\sim\mu$, where $\mu$ is a purely atomic probability measure with
finitely many atoms, the theorem gives existence of a symmetric Nash
equilibrium.

\begin{theorem} \label{thmAtomicMeas}
Suppose $\chi\in\sM$ consists of finitely many atoms, that is, $\chi
=\sum_{j=1}^{N}p_{j}\delta_{\xi_{j}}$,
where $0\leq\xi_{1}<\xi_{2}<\cdots<\xi_{N}$ and $p_{j} > 0$
for all $1\leq j\leq N$.\vspace*{2pt} Then $\sA^*_\chi$ is nonempty.
\end{theorem}

\begin{pf}
If $\chi$ is a point mass at zero, then set $\pi(\{0\})=\chi(\{0\})$.
Then $\pi\in\sA^*_\chi$
and the construction is complete.

Otherwise, set
$Q_{1}(r,y)=yF_{\chi}(0)+ry^{2}/2$.
Then there exists a unique value of $r$ ($r_{1}$ say) such that
\begin{eqnarray*}
Q_{1}(r_{1},y)&\geq& P_{\chi}(y)\qquad \forall y\geq0
\quad\mbox{and}
\\
Q_{1}(r_{1},y)&=&P_{\chi}(y) \qquad\mbox{for some }y>0.
\end{eqnarray*}
Let $y_{1}=\max\{y>0\dvtx Q_{1}(r_{1},y)=P_{\chi}(y)\}$. That $y_1$ is
finite is one of the conclusions of Proposition~\ref{propQuadBound},
but also follows from the fact that $Q_1(r_1, \cdot)$ is quadratic,
whereas $P_\chi(\cdot)$ is ultimately linear.
Then necessarily
$P_{\chi}'(y_{1})$ exists and $\frac{\partial}{\partial
y}Q_{1}(r_{1},y_{1})=P_{\chi}'(y_{1})$ (see the proof of
Proposition~\ref{propSmooth}).
Note that $y_{1}\notin\{\xi_{1},\xi_{2},\ldots,\xi_{N}\}$ since
$P_{\chi}'$
has a kink at these points. Set $\xi_{0}=0$ and $\xi_{N+1}=\infty$ and
let $n_{1}$ be such that $\xi_{n_{1}}<y_{1}<\xi_{n_{1}+1}$.
If $n_{1}=N$ [equivalently $P_{\chi}'(y_{1})=\sum_{j=1}^{N}p_{j}=\chi
(\mathbb{R}^+)$]
then stop. Otherwise, we proceed inductively.

\begin{figure}

\includegraphics{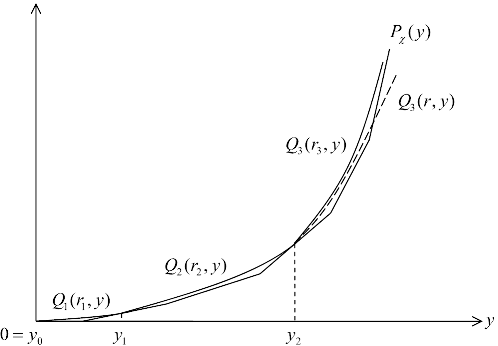}

\caption{The construction of $Q(y)$. The piecewise linear curve is
$P_{\chi}(y)$. The functions
$Q_{3}(\cdot,y)$ are quadratic functions of $y$ on $[y_2,\infty)$ such
that $Q_{3}(\cdot,y_{2})=P_{\chi}(y_{2})$
and $\frac{\partial}{\partial y}Q_{3}(\cdot,y_{2})=P_{\chi}'(y_{2})$.
Then $r_{3}$ is the unique value of $r$ such that $Q_{3}(r_{3},y)\geq
P_{\chi}(y)$
for all $y\geq y_{2}$ and $Q_{3}(r_{3},y)=P_{\chi}(y)$ for some
$y>y_{2}$. The dashed curve is $Q_{3}(r,y)$ with $r<r_{3}$.}
\label{figQConst}
\end{figure}

Let $y_{0}=0$. Suppose we have found $0<y_{1}<y_{2}<\cdots<y_{k}<\xi_{N}$
($y_{i}\notin\{\xi_{1},\xi_{2},\ldots,\xi_{N}\}$ $\forall1\leq
i\leq k$)
and $Q(\cdot)$ on $[0,y_{k}]$ such that (i) $Q$ is continuously differentiable,
(ii) $Q(y_{i})=P_{\chi}(y_{i})$ and $Q'(y_{i})=P_{\chi}'(y_{i})$
for any $1\leq i < k$,
(iii)~$Q(y_{k})=P_{\chi}(y_{k})$ and $Q'(y_{k}-)=P_{\chi}'(y_{k})$ and
(iv)~$Q''$ is defined everywhere except at the points $0,y_1, y_2,\ldots
, y_k$ and is piecewise constant and decreasing. In particular,
$Q$ is quadratic on $ \{ (y_{i-1},y_{i}) \} _{1\leq i\leq k}$
with representation $Q(y)=Q_{i}(r_{i},y)$ for $y \in[y_{i-1},y_i]$ where
\[
Q_{i}(r_{i},y)\triangleq P_{\chi}(y_{i-1})+(y-y_{i-1})P_{\chi
}'(y_{i-1})+
\tfrac{1}{2}r_{i}(y-y_{i-1})^{2},
\]
and where $(r_{i})_{1\leq i\leq k}$
is a strictly decreasing sequence. Let $Q_{k+1}(r,y)=P_{\chi
}(y_{k})+(y-y_{k})P_{\chi}'(y_{k})+\frac{1}{2}r(y-y_{k})^{2}$;
then there exists a unique $r$ ($r_{k+1}$ say) such that
\[
Q_{k+1}(r_{k+1},y)\geq P_{\chi}(y)\qquad \forall y\geq
y_{k}
\]
and
\[
Q_{k+1}(r_{k+1},y)=P_{\chi}(y)\qquad\mbox{for some
}y>y_{k}.
\]
See Figure \ref{figQConst}. Since $Q_{k}(r_{k},y)>P_{\chi}(y)$ for all $y>y_{k}$, it is clear
that $0<r_{k+1}<r_{k}$. Set $y_{k+1}=\max\{
y>y_{k}\dvtx Q_{k+1}(r_{k+1},y)=P_{\chi}(y)\}$.
Then $P_{\chi}'(y_{k+1})$ exists, $P_{\chi}'(y_{k+1})=\frac{\partial
}{\partial y}Q_{k+1}(r_{k+1},y_{k+1})$,
and $y_{k+1}\notin\{\xi_{1},\xi_{2},\ldots,\xi_{N}\}$ since
$P_{\chi}$
has changes in slope at these points. Set $Q(y)=Q_{k+1}(r_{k+1},y)$
on $[y_{k},y_{k+1}]$.

We repeat the construction up to and including the index $k=T-1$
for which $y_{k+1}>\xi_{N}$. Then $y_{T-1}<\xi_{N}<y_{T}$. Finally, we
set $Q(y)=P_{\chi}(y) = \chi(\R^+) y - \overline{\chi}$ for $y\geq y_{T}$.

For $y>0$, let $\rho(y)=Q''(y)$. Then $\rho$ is defined almost
everywhere and $\rho(y)=r_{i}$ on $(y_{i-1},y_{i})_{1\leq i\leq T}$
and $\rho(y)=0$ on $(y_{T},\infty)$. Furthermore, $\rho$ is decreasing
and $\rho$ only decreases at points where $P_{\chi}(y)=Q(y)$. Let
$\pi$ be the measure with an atom at 0 of size
$F_{\chi}(0)$
and
density $\rho$ on $(0,\infty)$, and recall that $y_{T}>\xi_{N}$. Then
$F_{\pi}(0)=F_{\chi}(0)$; for any $y\geq y_{T}$, $F_{\pi}(y)=P_{\pi
}'(y_{T})=P_{\chi}'(y_{T})=\chi(\mathbb{R}^+)$
and $\overline{\pi} = \int_{0}^{\infty}y\pi(dy)=y_{T}F_{\pi
}(y_{T})-P_{\pi}(y_{T})=y_{T}F_{\chi}(y_T)-P_{\chi}(y_{T})=\overline
{\chi}$.
Furthermore, $P_{\pi}(y)=Q(y) \geq P_{\chi}(y)$.
It follows that $\pi\in\sA^*_{\chi}$.
\end{pf}

\begin{remark} \label{rmkAtomicMap}
Fix any $k\in\{1,2,\ldots,T\}$. Let $n_{k}$ be such that $\xi
_{n_{k}}<y_{k}<\xi_{n_{k}+1}$.
Then, in the mapping $\chi\mapsto\pi$, the atoms\vspace*{2pt} $(\xi_{1},\xi
_{2},\ldots,\xi_{n_{k}})$
of $\chi$ are mapped to $[0,y_{k}]$, and $\pi([0,y_{k}])=P_{\chi
}'(y_{k})=\sum_{j=1}^{n_{k}}p_{j}$.
Moreover, $\int_{0}^{y_{k}}y\pi(dy)=y_{k}F_{\pi}(y_{k})-P_{\pi
}(y_{k})=y_{k}\sum_{j=1}^{n_{k}}p_{j}-P_{\chi}(y_{k})=y_{k}\sum
_{j=1}^{n_{k}}p_{j}-\sum_{j=1}^{n_{k}}p_{j}(y_{k}-\xi_{j})=\sum
_{j=1}^{n_{k}}p_{j}\xi_{j}$.
\end{remark}

\begin{example}
Suppose that $\chi=p\delta_{\xi}$ with $\xi>0$. Then $P_{\chi
}(y)=p(y-\xi)^{+}$.
Let $Q_{1}(r,y)=ry^{2}/2$. Then, $Q_{1}(r,y)\geq P_{\chi}(y)$ if
and only if $r\geq p/(2\xi)\triangleq r_{1}$, and for $r=r_{1}$ we have
$Q_{1}(r_{1},y)\geq P_{\chi}(y)$
with equality at $y=0$ and $y=y_{1}=2\xi$. Then $y_{1}>\xi$
so that the construction ends and $\pi=p \mathcal{U}[0,2\xi]$.
\end{example}

The rest of this subsection is devoted to the proof of a useful
proposition which will be used
in the next subsection to find the optimal target law for a general initial
measure.

For $\varpi\in\sM$, let $X_\varpi$ be a random variable with law
$\varpi$. Denote by $\breve{\varpi}$ the law of a random variable
$\breve{X}_\varpi$
where conditional on $X_\varpi= x$, $\breve{X}_\varpi$ has law $ \uU_x
= \mathcal{U}[0,2x]$.

\begin{lemma}
\label{lembreve}
For $x \geq0$, and $\varpi\in\sM$,
%
\begin{eqnarray}\label{eqnPbreve}
F_{\breve{\varpi}}(x) &=& F_\varpi(x/2) + \int
_{(x/2,\infty)} \varpi(dy) \frac{x}{2y};
\nonumber\\[-8pt]\\[-8pt]\nonumber
P_{\breve{\varpi}}(x) &=& \int_{x/2}^\infty
P_\varpi(u) \frac{x^2}{2 u^3} \,du.
\end{eqnarray}
\end{lemma}

\begin{pf}
We prove the second result. We have
\begin{eqnarray*}
P_{\breve{\varpi}}(x) & = & xF_{\varpi}(0) + \int_{(0,\infty
)}
\varpi(du)\int_{0}^{2u}\frac{(x-z)^{+}}{2u}\,dz
\\
& = & xF_{\varpi}(0) + \int_{(0,x/2)}(x-u)\varpi(du)+\int
_{[{x}/{2},\infty)}\frac{x^{2}}{4u}\varpi(du),
\end{eqnarray*}
and integrating by parts we find
\[
P_{\breve{\varpi}}(x) = \int_{(0,x/2)}F_{\varpi}(u)\,du +
\int_{[x/2,\infty)}F_{\varpi}(u)\frac{x^{2}}{4u^{2}}\,du.
\]
The result follows from a further integration by parts.
\end{pf}

\begin{corollary}
\label{corbreve}
If $\pi\preceq_{\mathrm{cx}} \varpi$, then $\breve{\pi} \preceq_{\mathrm{cx}}
\breve
{\varpi}$.
\end{corollary}

\begin{proposition} \label{propProperties}
Suppose $\mu\in\sM$ consists of finitely many atoms.
Denote by $\nu$ the element of $\sA^*_\mu$
which is constructed using the algorithm in Theorem \ref
{thmAtomicMeas}. Suppose $\varpi$ is any
measure such that $\varpi$ has mass $\mu(\R^+)$, mean $\overline
{\mu}$
and $\mu\preceq_{\mathrm{cx}}\varpi$. Then
$\nu\preceq_{\mathrm{cx}}\breve{\varpi}$.
\end{proposition}

\begin{pf}
By Corollary~\ref{corbreve}, it is sufficient to prove the proposition
in the case $\varpi= {\mu}$.

Suppose $\mu=\sum_{j=1}^{N}p_{j}\delta_{\xi_{j}}$, where $0\leq\xi
_{1}<\xi_{2}<\cdots<\xi_{N}$,
$p_{i}>0$ and\break $\sum_{j=1}^{N}p_{j}\xi_{j}=\overline{\mu}$.
By construction, $\breve{\mu}=\sum_{j=1}^{N}p_{j}\mathcal{U}[0,2\xi
_{j}]$ where
$\mathcal{U}[0,0]$ can more simply be written as $\delta_{0}$.

For any $m\in\{1,2,\ldots,N\}$, define $\mu^{m}=\sum
_{j=1}^{m}p_{j}\delta_{\xi_{j}}$
and suppose $\nu^{m}$ is the corresponding measure derived using
the algorithm in Theorem \ref{thmAtomicMeas}.

If $N=1$, then $\mu=p_{1}\delta_{\xi_{1}}$ and $\nu=p_{1}\mathcal
{U}[0,2\xi_{1}] = \breve{\mu}$
and the result holds.

Now suppose that $N\geq2$. Then $\nu=\nu^{N}=\sum_{m=1}^{N-1}(\nu
^{m+1}-\nu^{m})+\nu^{1}$.
Note that $\mu^{m+1} - \mu^{m} \sim p_{m+1} \delta_{\xi_{m+1}}$ and
hence $\nu^{m+1} - \nu^{m}$ has mass $p_{m+1}$ and mean $\xi_{m+1}$.
Provided we can show that $(\nu^{m+1}-\nu^{m})$ has a nondecreasing density,
then it follows from Proposition \ref{propBound} that $(\nu
^{m+1}-\nu
^{m})\preceq_{\mathrm{cx}}p_{m+1}\mathcal{U}[0,2\xi_{m+1}]$.
Then, since convex order is preserved under addition, if $(\nu
^{m+1}-\nu
^{m})$ has a nondecreasing density for every $m$ with $1 \leq m \leq
N-1$, then $\nu\preceq_{\mathrm{cx}}\sum_{m=1}^{N}p_{m}\mathcal{U}[0,2\xi
_{m}]=\breve{\mu}$.

We use a suffix $m$ to label quantities constructed in Theorem \ref
{thmAtomicMeas},
to show that they are constructed from measure $\mu^{m}$. The idea of
the proof is that in calculating $\nu^m$ and $\nu^{m+1}$ using the
algorithm of Theorem~\ref{thmAtomicMeas}, the early parts of the
construction will be the same, and indeed $\nu^m$ and $\nu^{m+1}$ will
differ only over the final nonzero element of $\nu^{m+1}$.

Fix $m$ with $1 \leq m \leq N$.
Define $\sB^m \subseteq\{1, 2, \ldots, T^m \}$ by $\sB^m = \{ k\dvtx\break
Q_{k}^{m}(r_{k}^{m},y)\geq P_{\mu^{m+1}}(y) \mbox{ on
}(y_{k-1}^{m},\infty) \}$.

\begin{longlist}[\textit{Case} (a)]
\item[\textit{Case} (a).] $\sB^m = \{1, 2, \ldots,T^m \}$.

Then\vspace*{1pt} $ (Q_{j}^{m}(r_{j}^{m},y),y_{j}^{m} )_{1 \leq j \leq T^m}$
and $ (Q_{j}^{m+1}(r_{j}^{m+1},y),y_{j}^{m+1} )_{1 \leq j
\leq T^m}$
are the same. Then also $T^{m+1} = T^m + 1$,
$y_{T^{m}}^{m}<y_{T^{m+1}}^{m+1}$ and the densities $\rho^{m+1}$ and
$\rho^{m}$ satisfy that $\rho^{m+1}=\rho^{m}$ on the interval
$(0,y_{T^{m}}^{m}=y_{T^{m}}^{m+1})$,
$\rho^{m+1}$ is\vspace*{1pt} constant on $(y_{T^{m}}^{m+1},y_{T^{m+1}}^{m+1})$
and $\rho^{m}$ is zero on $(y_{T^{m}}^{m+1},y_{T^{m+1}}^{m+1})$.
In particular, $\nu^{m+1}- \nu^{m} = p_{m+1} \sU[y^{m+1}_{T^m},
y^{m+1}_{T^{m}}] \preceq_{\mathrm{cx}} p_{m+1} \sU[0, 2 \xi_{m+1}]$.



\begin{figure}

\includegraphics{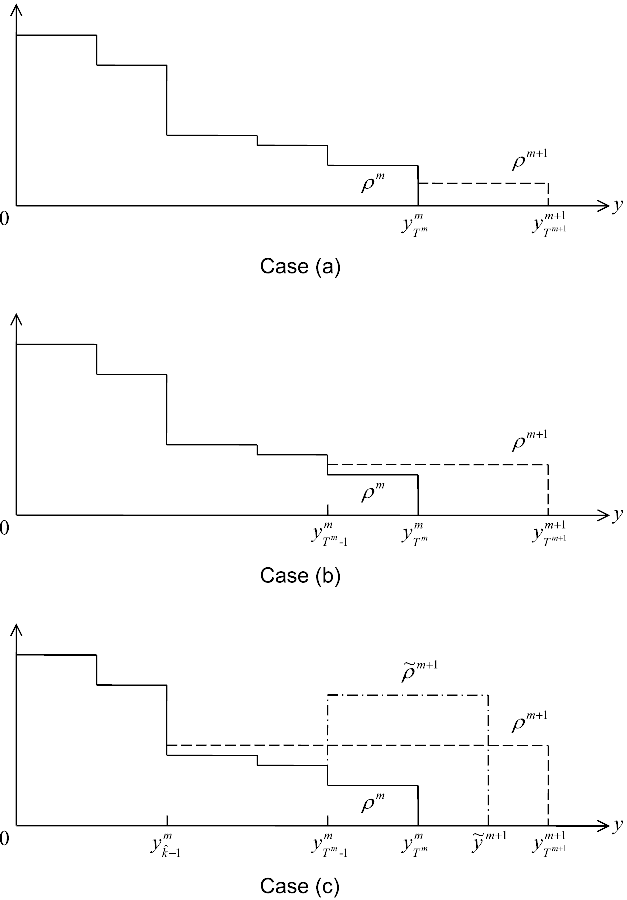}

\caption{Graph of the decreasing, piecewise constant functions $\rho
^{m}(y)$ and $\rho^{m+1}(y)$.
Observe that $\rho^{m}\leq\rho^{m+1}$ and $(\rho^{m+1}-\rho^{m})$
is nondecreasing on $(0,y_{T^{m+1}}^{m+1})$. The three cases
correspond to the three cases in the proof. In case~\textup{(c)} the density
$\tilde{\rho}^{m+1}$ is also shown.}
\label{figDensities}
\end{figure}

\item[\textit{Case} (b).]  $\inf\{ k\dvtx k \notin\sB^m \} = T^m$.

Then it must be that in the construction we have $T^m = T^{m+1}$, $\rho
^{m+1}=\rho^{m}$ on the interval $(0,y^m_{T^m - 1} \equiv y_{T^m - 1}^{m+1})$, $\rho^{m+1}$
is constant (with value denoted by $r_{T^{m+1}}^{m+1}$ say) on $(y_{T^m
-1}^{m+1},y_{T^{m+1}}^{m+1})$,
and $\rho^{m}$ is constant and strictly less than $r_{T^{m+1}}^{m+1}$
on $(y_{T^m-1}^{m+1},y_{T^{m}}^{m})$, $\rho^{m}$ is zero on
$(y_{T^{m}}^{m},\infty)$. We want to
argue that $y_{T^{m}}^{m}<y_{T^m}^{m+1}$, which then implies
that $(\rho^{m+1}-\rho^{m})$ is nondecreasing on
$(0,y_{T^{m}}^{m+1})$. See case~(b) of Figure~\ref{figDensities}.

Note that in the construction of $\nu^{m}$ the masses at points $(\xi
_{n^m_{T^{m}-1}+1},\ldots,\xi_{m})$
are embedded in the interval $(y_{T^{m}-1}^{m},y_{T^{m}}^{m})$, and
in the construction of ${\nu}^{m+1}$ the masses at points $(\xi
_{n^m_{T^{m}-1}+1},\ldots,\xi_{m+1})$
are embedded in $(y_{T^{m}-1}^{m},{y}_{T^m}^{m+1})$. Moreover, $\nu^{m}$
has constant density over $(y_{T^{m}-1}^{m},y_{T^{m}}^{m})$ and ${\nu}^{m+1}$
has constant density over $(y_{T^{m}-1}^{m},{y}_{T^m}^{m+1})$. Consider
the means of $\nu^{m}$ and ${\nu}^{m+1}$; by Remark~\ref
{rmkAtomicMap}, we have
\[
\frac{1}{2} \bigl(y_{T^{m}-1}^{m}+y_{T^{m}}^{m}
\bigr)\sum_{j=n^m_{T^{m}-1}+1}^{m}p_{j}=\sum
_{j=n^m_{T^{m}-1}+1}^{m}p_{j}
\xi_{j}
\]
and
\[
\frac{1}{2} \bigl(y_{T^{m}-1}^{m}+{y}_{T^m}^{m+1}
\bigr)\sum_{j=n^m_{T^{m}-1}+1}^{m+1}p_{j}=\sum
_{j=n^m_{T^{m}-1}+1}^{m+1}p_{j}
\xi_{j}.
\]
Hence,
\begin{eqnarray*}
y_{T^{m}}^{m}-y_{T^{m}-1}^{m} & =&
\frac{2\sum_{j=n^m_{T^{m}-1}+1}^{m}p_{j} (\xi_{j}-y_{T^{m}-1}^{m}
)}{\sum_{j=n^m_{T^{m}-1}+1}^{m}p_{j}}
\\
& <&\frac{2\sum_{j=n^m_{T^{m}-1}+1}^{m+1}p_{j} (\xi
_{j}-y^m_{T^{m}-1} )}{\sum
_{j=n^m_{T^{m}-1}+1}^{m+1}p_{j}}={y}_{T^m}^{m+1}-y_{T^{m}-1}^{m}
\end{eqnarray*}
and then $y_{T^{m}}^{m}<{y}_{T^m}^{m+1}$.

\item[\textit{Case} (c).] $\inf\{ k\dvtx k \notin\sB^m \} < T^m$.\vspace*{1pt}

Define $\hat{k}\triangleq\inf\{ k\dvtx k \notin\sB^m \}$.
Then it must be that in the construction we have $T^{m+1} =\hat{k}$,
$\rho^{m+1}=\rho^{m}$
on the interval $(0,y_{\hat{k}-1}^{m} \equiv y_{\hat{k}-1}^{m+1})$,
$\rho^{m+1}$
is constant (with value denoted by $r_{T^{m+1}}^{m+1}$ say) on
$(y_{\hat
{k}-1}^{m+1},y_{T^{m+1}}^{m+1})$,
$\rho^{m}$ is decreasing and strictly less than $r_{T^{m+1}}^{m+1}$
on $(y_{\hat{k}-1}^{m+1},y_{T^{m}}^{m})$ and $\rho^{m}$ is zero on
$(y_{T^{m}}^{m},\infty)$. Similarly to case~(b), we want to
argue that $y_{T^{m}}^{m}<y_{T^{m+1}}^{m+1}$, which then implies
that $(\rho^{m+1}-\rho^{m})$ is zero on $(0,y^m_{\hat{k}-1})$ and
nondecreasing on $(y^m_{\hat{k}-1},y_{T^{m+1}}^{m+1})$. See case~(c)
of Figure~\ref{figDensities}.

We first construct a new measure $\tilde{\nu}^{m+1}$.
Define $\tilde{Q}^{m+1}$ on $[0, y^m_{T^m-1}]$ by $\tilde{Q}^{m+1}(y) =
Q^m(y) = P_{\nu^m}(y)$.
Let $L^{m+1}$ be the line $L^{m+1}(y)=\sum_{j=1}^{m+1}p_{j}(y-\xi_{j})$
so that $P_{\mu^{m+1}}(y) = \max\{ P_{\mu^m}(y), L^{m+1}(y) \}$ and
for $y \geq y^m_{T^m-1}$ define
\[
\tilde{Q}_{T^{m}}^{m+1}(r,y)\triangleq P_{\mu
^{m}}
\bigl(y_{T^{m}-1}^{m} \bigr)+ \bigl(y-y_{T^{m}-1}^{m}
\bigr)P_{\mu
^{m}}' \bigl(y_{T^{m}-1}^{m}
\bigr)+\tfrac{1}{2}r \bigl(y-y_{T^{m}-1}^{m}
\bigr)^{2}.
\]
Then there exists a unique $r$ (denoted by $\tilde{r}^{m+1}$ say) such that
\[
\tilde{Q}_{T^{m}}^{m+1} \bigl(\tilde{r}^{m+1},y \bigr)
\geq L_{m+1}(y)\qquad \forall y\geq y_{T^{m}-1}^{m}
\]
and
\[
\tilde{Q}_{T^{m}}^{m+1} \bigl(\tilde{r}^{m+1},y
\bigr)=L_{m+1}(y)\qquad\mbox{for some }y>y_{T^{m}-1}^{m}.
\]
Note that in the construction of $\tilde{\nu}^{m+1}$ (unlike in the
construction of $\nu^m$ or ${\nu}^{m+1}$) there is no requirement that
$\tilde{r}^{m+1} \leq r^m_{T^m - 1}$.
Let $\tilde{y}^{m+1}$ be the point such that $\tilde
{Q}_{T^{m}}^{m+1}(\tilde{r}^{m+1},\tilde{y}^{m+1})=L_{m+1}(\tilde{y}^{m+1})$.
Then $\tilde{y}^{m+1}>y_{T^{m}-1}^{m}$ and $\frac{\partial}{\partial
y}\tilde{Q}_{T^{m}}^{m+1}(\tilde{r}^{m+1},\tilde
{y}^{m+1})=L_{m+1}'(\tilde{y}^{m+1}) = \sum_{j=1}^{m+1} p_j$.
Now let $\tilde{Q}^{m+1}(\cdot)$ be given by (see Figure~\ref{figNewMeas})
\begin{eqnarray*}
\tilde{Q}^{m+1}(y) & =&P_{\nu^{m}}(y)\cdot\mathbf
{1}_{[0,y_{T^{m}-1}^{m})}+\tilde{Q}_{T^{m}}^{m+1} \bigl(\tilde
{r}^{m+1},y \bigr)\cdot\mathbf{1}_{[y_{T^{m}-1}^{m},\tilde{y}^{m+1})}
\\
&&{} +L_{m+1}(y)\cdot\mathbf{1}_{[\tilde{y}^{m+1},\infty)}.
\end{eqnarray*}

Let $\tilde{\rho}^{m+1}=(\tilde{Q}^{m+1})''$, and let $\tilde{\nu}^{m+1}$
be the measure with density $\tilde{\rho}^{m+1}$ on $(0,\infty)$ and an
atom at 0
of size $F_{\mu}(0)$. Then $P_{\tilde{\nu}^{m+1}}(y)=\tilde{Q}^{m+1}(y)$,
and for $y\geq\tilde{y}^{m+1}$ we have $F_{\tilde{\nu
}^{m+1}}(y)=L_{m+1}'(\tilde{y}^{m+1})=\sum_{j=1}^{m+1}p_{j}$
and $\int_{0}^{\infty}y\tilde{\nu}^{m+1}(dy)=\tilde
{y}^{m+1}F_{\tilde
{\nu}^{m+1}}(\tilde{y}^{m+1})-P_{\tilde{\nu}^{m+1}}(\tilde{y}^{m+1})
=\sum_{j=1}^{m+1}p_{j}\xi_{j}$. In particular, $\tilde{\nu}^{m+1}$ has
the same mass and first moment as $\mu^{m+1}$, and hence as $\nu^{m+1}$.

\begin{figure}

\includegraphics{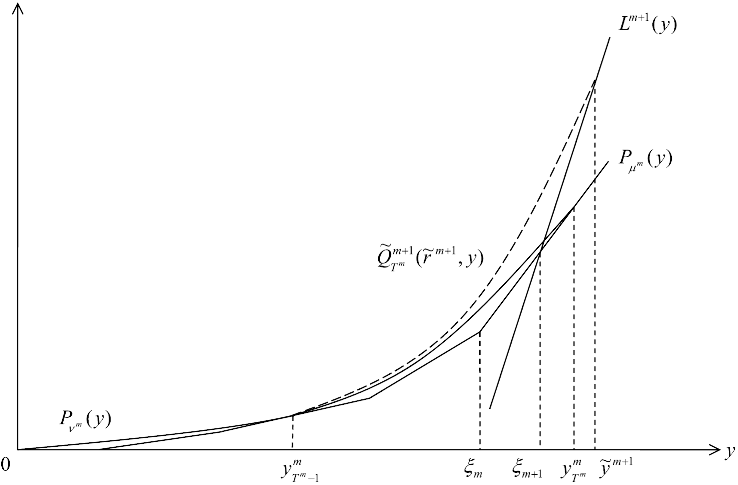}

\caption{Graph of $\tilde{Q}^{m+1}(y)$. The dashed curve $\tilde
{Q}_{T^{m}}^{m+1}(\tilde{r}^{m+1},y)$
is a quadratic function of $y$ over the interval $[y^m_{T^m - 1},
\tilde
{y}^{m+1}]$.}\vspace*{-4pt}\label{figNewMeas}
\end{figure}

The point about the intermediate measure $\tilde{\nu}^{m+1}$ is that as
in case~(b) the masses
at points $(\xi_{n^m_{T^{m}-1}+1},\ldots,\xi_{m+1})$
are embedded in the interval $(y_{T^{m}-1}^{m},\tilde{y}^{m+1})$.
In particular,
$\nu^{m}$
has constant density over $(y_{T^{m}-1}^{m},y_{T^{m}}^{m})$ and $\tilde
{\nu}^{m+1}$
has constant density over $(y_{T^{m}-1}^{m},\tilde{y}^{m+1})$. Then,
exactly as in the proof of case~(b),
considering
the means of $\nu^{m}$ and $\tilde{\nu}^{m+1}$, we have
$y_{T^{m}}^{m}<\tilde{y}^{m+1}$.

\begin{figure}

\includegraphics{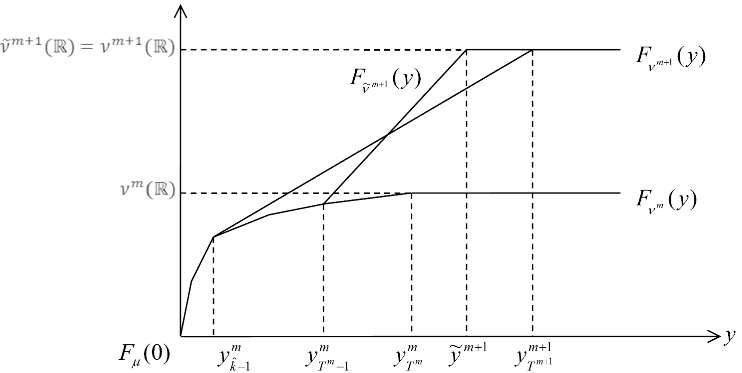}

\caption{Graph\vspace*{1pt} of $F_{\nu^{m+1}}$, $F_{\nu^{m}}$ and $F_{\tilde{\nu
}^{m+1}}$
in the case $\hat{k}<T^{m}$. By the constructions, $F_{\nu
^{m+1}}(y)=F_{\nu^{m}}(y)=F_{\tilde{\nu}^{m+1}}(y)$
on $[0,y_{\hat{k}-1}^{m+1}]$, $F_{\nu^{m+1}}(y)>F_{\nu
^{m}}(y)=F_{\tilde
{\nu}^{m+1}}(y)$
on $(y_{\hat{k}-1}^{m+1},y_{T^{m}-1}^{m}]$ and $F_{\tilde{\nu}^{m+1}}$
is linear on $(y_{T^{m}-1}^{m},\tilde{y}^{m+1})$. Since $F_{\nu^{m+1}}$
and $F_{\tilde{\nu}^{m+1}}$ have the same mean, the area between
the horizontal line at $\tilde{\nu}^{m+1}(\mathbb{R})$ and $F_{\nu
^{m+1}}$ must
be equal to the area between the line at $\tilde{\nu}^{m+1}(\mathbb{R})$
and $F_{\tilde{\nu}^{m+1}}$. Hence, $\tilde
{y}^{m+1}<y_{T^{m+1}}^{m+1}$.}\label{figDistrCompa}
\end{figure}

Next, we wish to compare the supports of $\tilde{\nu}^{m+1}$ and $\nu
^{m+1}$. Recall that $\tilde{\nu}^{m+1}(\mathbb{R}^+)=\nu
^{m+1}(\mathbb
{R}^+) = \sum_{j=1}^{m+1} p_j$
and $\tilde{\nu}^{m+1}$ and $\nu^{m+1}$ have the same mean. Moreover,
$F_{\nu^{m+1}}(y)=F_{\nu^{m}}(y)=F_{\tilde{\nu}^{m+1}}(y)$ on
$[0,y_{\hat{k}-1}^{m}]$,
and $F_{\nu^{m+1}}(y)>\break F_{\nu^{m}}(y)=F_{\tilde{\nu}^{m+1}}(y)$
on $(y_{\hat{k}-1}^{m},y_{T^{m}-1}^{m}]$. This implies that $\tilde
{y}^{m+1}<y_{T^{m+1}}^{m+1}$ since the means of $\tilde{\nu}^{m+1}$ and
$\nu^{m+1}$ are the same [and hence the area between $F_{\nu^{m+1}}$
and the horizontal line at height $\nu^{m+1}(\R^+)$ is equal to the
area between
$F_{\tilde{\nu}^{m+1}}$ and the same horizontal line (see Figure~\ref
{figDistrCompa})]. Hence,
$y^m_{T^m} < \tilde{y}^{m+1} < y^{m+1}_{T^{m+1}}$
and we find that $(\rho^{m+1}-\rho^{m})$ is nondecreasing
on $(0,y_{T^{m+1}}^{m+1})$, and $(\nu^{m+1}-\nu^{m})$ is a positive
measure with increasing density on its support. Hence, $\nu^{m+1} -
\nu
^m \preceq_{\mathrm{cx}} p_{m+1} \sU[0, 2 \xi_{m+1}]$.\quad\qed
\end{longlist}\noqed
\end{pf}

\subsection{General initial measure}\label{secGenerMeas}

Our first result shows that if $\mu$ is a general measure in $\sM$ then
$\sA^*_\mu$ is nonempty, and hence a symmetric Nash equilibrium exists.

\begin{theorem} \label{thmGeneralMeas}
Suppose $\mu\in\sM$. Then $\sA^*_\mu$ is nonempty.
\end{theorem}

\begin{pf}
Let $ \{ \mu_{n} \} _{n\geq1}$ be a sequence of atomic probability
measures with finite support such that $F_{\mu_{n}}(0)=F_{\mu}(0)$,
$\mu_{n}$ has total mass $\mu(\R^+)$ and mean $\overline{\mu}$ and
$\mu
_{n}\uparrow\mu$ in convex order.
Theorem~\ref{thmAtomicMeas} implies that for every ${n}$,
there exists
$\nu_{n}\in\mathcal{A}_{\mu_{n}}^{*}$. Define $D_n\dvtx [0,\infty)
\mapsto
[0,\mu(\R^+)]$ by
\[
D_{n}(x)=-C_{\nu_{n}}'(x)=\mu\bigl(\R^+
\bigr)-F_{\nu_{n}}(x)
\]
and let $b_{n}=\sup\{ x\dvtx F_{\nu_{n}}(x)< \mu(\R^+) \}
$. From
the construction
of $\nu_{n}$ in the proof of Theorem \ref{thmAtomicMeas} it can be
seen that $b_{n}$ is finite. Thus,
$D_{n}$ is a decreasing, convex function with $D_{n}(0)=\mu(\R
^+)-F_{\mu}(0)$,
$D_{n}(b_{n})=0$, $D_{n}\geq0$ and
%
\begin{equation}
\int_{0}^{\infty}D_{n}(x)\,dx=\int
_{0}^{b_{n}}D_{n}(x)\,dx=\int
_{0}^{b_{n}}x\nu_{n}(dx)=\overline{
\nu}_n = \overline{\mu}_n = \overline{
\mu}.\label{eqIntegralDn}
\end{equation}

Helly's Theorem [see \citet{Helly1912} or
\citet{Filipow2012}, Theorem 1.3]  states that
if $\{f_{n}\}_{n\geq1}$ is a uniformly bounded sequence of
monotone real-valued functions defined on $\mathbb{R}$ then there
is a subsequence $\{f_{n_{k}}\}_{k\geq1}$ which is pointwise convergent.
Hence, there exists a convergent subsequence of $\{D_{n}\}_{n\geq1}$ and
moving to a subsequence as necessary, we may assume $\{D_{n}\}_{n\geq
1}$ is pointwise
convergent. Let $D_{\infty}$ denote the limit function. Since $D_{n}$
is decreasing and convex for any $n\geq1$, $D_{\infty}$ is also decreasing
and convex. Moreover, by Fatou's lemma and (\ref{eqIntegralDn}),
\[
\int_{0}^{\infty}D_{\infty}(x)\,dx =\int
_{0}^{\infty}\liminf_{n\rightarrow\infty}D_{n}(x)\,dx
\leq\liminf_{n\rightarrow\infty}\int_{0}^{\infty}D_{n}(x)\,dx
=\overline{\mu}.
\]
In particular, since $D_{\infty}$ is decreasing and $\int_{0}^{\infty
}D_{\infty}(x)\,dx<\infty$,
we must have $\lim_{x\uparrow\infty}D_{\infty}(x)=0$.

Define a measure $\nu$ via $\nu((-\infty,x])=\mu(\R^+)-D_{\infty}(x)$.
It is clear that $\nu(\R^+)= \mu(\R^+)$, $F_{\nu}(0)=F_{\mu}(0)$,
$F_{\nu}(x)$ is continuous
and $\nu$ has a nonincreasing density $\rho$. Moreover, $\nu_n$
converges in distribution to $\nu$.

We wish to show that $\nu$ has mean $\overline{\mu}$. Note that $\nu_n
\preceq_{\mathrm{cx}} \breve{\mu}_n$ (Proposition~\ref{propProperties} applied
to $\mu_n$) and
$\mu_n \preceq_{\mathrm{cx}} \mu$ from which it follows that ${\nu}_n
\preceq
_{\mathrm{cx}} \breve{\mu}$. Hence, elements $\nu_n$ in the sequence have a
uniform bound (in the sense of convex order) and it follows that the
sequence is uniformly integrable and $\overline{\mu} = \lim_n
\overline
{\nu}_n = \overline{\nu}$.

Since $\nu_n$ converges to $\nu$ in distribution, it follows that
$P_{\nu_n}(x)$ converges pointwise to $P_\nu$.
Then since $\overline{\nu}_n \rightarrow\overline{\nu}$ it follows
that $C_{\nu_n}(x)$ converges pointwise to $C_\nu$.
Hence, $C_\nu(x) = \lim_n C_{\nu_n}(x) \geq\lim_n C_{\mu_n}(x) =
C_{\mu}(x)$.

Suppose $C_{\nu}(x)>C_{\mu}(x)$ on some interval $\mathcal{J}$: we show
that $\nu$ has constant density on $\mathcal{J}$. It is sufficient to
prove the result on every closed subinterval of $\mathcal{J}$, so we
assume $\mathcal{J} = [a,b]$. Then, by continuity of $C_\nu$ and
$C_\mu
$, there exists $\varepsilon$ such that $C_\nu(x) \geq C_\mu(x) +
\varepsilon
$ on $\mathcal{J}$. Let $\kappa= -C_{\mu}'(a+) \leq\mu(\R^+)$; then
$C'_\mu(x+) \geq- \kappa$ for all $x \in\mathcal{J}$. Fix $K \in
{\mathbb N}$ such that $K> 2(b-a)\kappa/\varepsilon$ and set $J^K= \{ a_j
\}_{0 \leq j \leq K}$ where $a_j = a + (b-a)j/K$. Since there is
pointwise convergence, there exists $N_{0}>0$ such that $C_{\nu
_{n}}(x)>C_{\mu}(x)+\varepsilon/2$
for all $x \in J^K$ and all $n\geq N_{0}$.
Then, for $n \geq N_0$, if $a_{j-1} \leq x \leq a_j$,
\begin{eqnarray*}
C_\mu(x) &\leq& C_\mu(a_{j-1})
\leq
C_\mu(a_j) + \kappa(a_j -
a_{j-1}) < C_\mu(a_j) + \varepsilon/2 <
C_{\nu_n}(a_j)
\\
&\leq& C_{\nu_n}(x).
\end{eqnarray*}
Finally, $C_{\mu_n}(x) \leq C_\mu(x)$ everywhere, and we conclude that
for sufficiently large $n$, $C_{\mu_n}(x) < C_{\nu_n}(x)$ on
$\mathcal
{J}$, and hence
$D_{n}(x)$ is a linear function on $\mathcal{J}$. It is easy to see
that $D_{\infty}(x)=\lim_{n\uparrow\infty}D_{n}(x)$
is also linear on $\mathcal{J}$, and thus $\nu$ has constant density on
$\mathcal{J}$ as required. Thus, the density of $\nu$ only
decreases when $C_{\nu}(x)=C_{\mu}(x)$. Hence, $\nu\in\sA^*_\mu$.
\end{pf}

\section{Uniqueness of a Nash equilibrium}
\label{secUnique}

Section~\ref{secExistence} shows that set $\mathcal{A}_{\mu}^{*}$
is nonempty. In this section, we prove that $\llvert\mathcal{A}_{\mu
}^{*}\rrvert\leq1$,
which thus completes the proof of Theorem \ref{thmSingleton}.

\begin{proposition} Suppose $\mu\in\sM$. Then $\llvert\sA^*_\mu
\rrvert\leq1$.
\label{proplessthanone}
\end{proposition}

\begin{pf}
Assume to the contrary that there exists two distinct elements $\nu$
and $\sigma$ in the set $\mathcal{A}_{\mu}^{*}$.
Recall that if $C_{\nu}(x)>C_{\mu}(x)$ then
$C_{\nu}$ is locally a quadratic function near $x$, similarly for
$C_{\sigma}$. Moreover, both $C_{\nu}'$ and $C_{\sigma}'$
are concave.

Observe that, for any $x\geq0$, we cannot have $C_{\nu}(y)>C_{\sigma}(y)$
for all $y\in(x,\infty)$: if so then $C_{\nu}(y)>C_{\sigma}(y)\geq
C_{\mu}(y)$
on $(x,\infty)$ and $C_{\nu}$ is quadratic on $(x,\infty)$, which
is impossible by Proposition \ref{propQuadBound}.

Let $x_0>0$ be such that $C_{\nu}(x_0)\neq C_{\sigma}(x_0)$. Without loss
of generality, suppose that $C_{\nu}(x_0)>C_{\sigma}(x_0)$. Define
$x_{1}=\inf\{ x>x_0\dvtx C_{\nu}(x)=C_{\sigma}(x) \} $.
By the observation above $x_{1}<\infty$. Also note that $C_{\nu
}(x)>C_{\sigma}(x)$
for all $x\in[x_0,x_{1})$.

Suppose that $C_{\nu}(x_{1})=C_{\sigma}(x_{1})>C_{\mu}(x_{1})$. Then,
near $x_{1}$,
\[
\cases{
\displaystyle C_{\nu}(x)=C_{\nu}(x_{1})+\beta_{\nu,1}(x-x_{1})+\gamma
_{\nu,1}(x-x_{1})^{2},
\cr
\displaystyle C_{\sigma}(x)=C_{\sigma}(x_{1})+\beta_{\sigma
,1}(x-x_{1})+ \gamma_{\sigma,1}(x-x_{1})^{2},}
\]
for some constants $\beta_{\nu,1}<0$, $\beta_{\sigma,1}<0$, $\gamma
_{\nu,1}>0$
and $\gamma_{\sigma,1}>0$. Since $C_{\nu}(x)>C_{\sigma}(x)$
to the left of $x_{1}$, it is clear that $\beta_{\nu,1}\leq\beta
_{\sigma,1}$.

Assume that $\beta_{\nu,1}=\beta_{\sigma,1}$. Then since $C_{\nu
}(x)>C_{\sigma}(x)$
on $[x_0,x_{1})$, we have $\gamma_{\nu,1}>\gamma_{\sigma,1}$. Let
$\hat{x}_{\nu,1}=\inf\{ x>x_{1}\dvtx C_{\nu}(x)=C_{\mu}(x)
\} $,
then $C_{\nu}(x)=C_{\nu}(x_{1})+\beta_{\nu,1}(x-x_{1})+\gamma_{\nu
,1}(x-x_{1})^{2}$
on $[x_{1},\hat{x}_{\nu,1}]$ and $\hat{x}_{\nu,1}<\infty$
by Proposition \ref{propQuadBound}. Further, by Proposition \ref
{propQuadBound},
$C_{\sigma}(x)\leq C_{\sigma}(x_{1})+\beta_{\sigma,1}(x-x_{1})+\gamma
_{\sigma,1}(x-x_{1})^{2}$
on $(x_{1},\infty)$. Thus, $C_{\sigma}(\hat{x}_{\nu,1})<C_{\nu
}(\hat
{x}_{\nu,1})=C_{\mu}(\hat{x}_{\nu,1})$,
which is a contradiction. Hence, $\beta_{\nu,1}<\beta_{\sigma,1}<0$.
Now let $\hat{x}_{\sigma,1}=\inf\{ x>x_{1}\dvtx C_{\sigma
}(x)=C_{\mu
}(x) \} <\infty$ (by Proposition~\ref{propQuadBound}).
If $\gamma_{\nu,1}\leq\gamma_{\sigma,1}$ then $C_{\nu}(\hat
{x}_{\sigma,1})<C_{\sigma}(\hat{x}_{\sigma,1})=C_{\mu}(\hat{x}_{\sigma,1})$,
which is a contradiction. So we conclude that $\beta_{\nu,1}<\beta
_{\sigma,1}<0$
and $\gamma_{\nu,1}>\gamma_{\sigma,1}>0$. Set $\vartheta=\beta
_{\sigma,1}-\beta_{\nu,1}>0$.

Now we introduce a useful lemma.

\begin{lemma} \label{lemxk}
Suppose $\nu$ and $\sigma$ are distinct elements of $\sA^*_\mu$.
Suppose $x_{k}$ is such that $C_{\nu}(x_{k})=C_{\sigma}(x_{k})>C_{\mu
}(x_{k})$.
Then $x_{k}>0$ and in a neighbourhood of $x_{k}$ we can write
\[
\cases{
\displaystyle C_{\nu}(x)=C_{\nu}(x_{k})+
\beta_{\nu,k}(x-x_{k})+\gamma_{\nu,k}(x-x_{k})^{2},
\cr
\displaystyle C_{\sigma}(x)=C_{\sigma}(x_{k})+\beta_{\sigma,k}(x-x_{k})+
\gamma_{\sigma,k}(x-x_{k})^{2}.}
\]
Suppose $\beta_{\nu,k} \neq\beta_{\sigma,k}$. Then
there is an interval to the left of $x_{k}$ on which $C_{\nu
}(x)-C_{\sigma}(x)$
is either strictly positive or strictly negative. Suppose that \mbox{$
C_{\nu}(x)-C_{\sigma}(x)>0$}
on some interval $(x_{k}-\varepsilon,x_{k})$: if not then interchange
the roles of $\nu$ and $\sigma$. Then $\beta_{\nu,k}<\beta_{\sigma,k}<0$
and $\gamma_{\nu,k}>\gamma_{\sigma,k}>0$.

Define $x_{k+1}=\sup\{ x<x_{k}\dvtx C_{\nu}(x)=C_{\sigma}(x)
\} $.
Then, $0< x_{k+1} < x_k$,\break $C_{\nu}(x_{k+1})=C_{\sigma
}(x_{k+1})>C_{\mu
}(x_{k+1})$,
and hence in a neighbourhood of $x_{k+1}$ we can write
%
\begin{equation}\label{eqRandnearXk+1}
\cases{
C_{\nu}(x)=C_{\nu}(x_{k+1})+\beta_{\nu,k+1}(x-x_{k+1})+\gamma_{\nu
,k+1}(x-x_{k+1})^{2},
\cr
C_{\sigma}(x)=C_{\sigma}(x_{k+1})+\beta_{\sigma,k+1}(x-x_{k+1})+
\gamma_{\sigma,k+1}(x-x_{k+1})^{2}.}
\end{equation}
Further, $\beta_{\sigma,k+1}<\beta_{\nu,k+1}<0$, $\gamma_{\sigma
,k+1}>\gamma_{\nu,k+1}>0$
and $ \beta_{\nu,k+1}-\beta_{\sigma,k+1}>\beta_{\sigma,k}-\beta
_{\nu,k}>0$.
\end{lemma}

\begin{pf}
Exactly as in the case $k=1$ from the proof of Proposition \ref
{proplessthanone},
we conclude that $\beta_{\nu,k}<\beta_{\sigma,k}<0$ and $\gamma
_{\nu,k}>\gamma_{\sigma,k}>0$.

Assume that $C_{\nu}(x_{k+1})=C_{\sigma}(x_{k+1})=C_{\mu}(x_{k+1})$.
If $x_{k+1}=0$, then\break $C'_{\nu}(x_{k+1}) = F_{\mu}(0)-\mu(\R^+) =
C'_{\sigma}(x_{k+1})$. Otherwise, if $x_{k+1}>0$ then\break
$C_{\nu}'(x_{k+1})=C_{\sigma}'(x_{k+1})$ by Proposition
\ref{propSmooth}. In either case, since $C_{\nu}(x)>\break C_{\sigma
}(x)\geq
C_{\mu}(x)$ on $(x_{k+1},x_{k})$,
$C_{\nu}'$ is linear on $[x_{k+1},x_{k}]$. Then because $\beta
_{\sigma,k} = C'_\sigma(x_k) > C'_{\nu}(x_k) = \beta_{\nu,k}$ and
$C_{\sigma}'$
is concave, it is clear that $C_{\sigma}'(x)\geq C_{\nu}'(x)$
on $(x_{k+1},x_{k})$. This contradicts the fact that $C_{\sigma}(\cdot)
= C_{\nu}(\cdot)$ at both $x_k$ and $x_{k+1}$.
Hence, $C_{\nu}(x_{k+1})=C_{\sigma}(x_{k+1})>C_{\mu}(x_{k+1})$.

It then follows that $x_{k+1}\in(0,x_{k})$ and both $C_{\nu}$ and
$C_{\sigma}$ are quadratic in a neighbourhood of $x_{k+1}$. In particular,
$C_{\nu}$ is quadratic on $(x_{k+1},x_{k})$ and $\gamma_{\nu,k}=\gamma
_{\nu,k+1}$.
Using a similar argument to that described in the case $k=1$, we get
that $\beta_{\sigma,k+1}<\beta_{\nu,k+1}<0$
and $\gamma_{\sigma,k+1}>\gamma_{\nu,k+1}>0$.

\begin{figure}[b]

\includegraphics{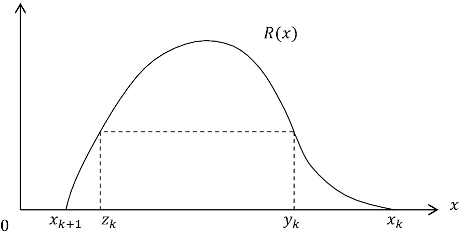}

\caption{Graph of $R(x)$. Since $R''(x)=\rho_{\nu}(x)-\rho_{\sigma}(x)$
is nondecreasing
on $(x_{k+1},x_{k})$ and $R''(y_{k})=0$, $R$ is concave on $[x_{k+1},y_{k}]$
and strictly convex on $[y_{k},x_{k}]$. Since $R'(x)=C_{\nu
}'(x)-C_{\sigma}'(x)$,
$R'(x_{k+1})=\beta_{\nu,k+1}-\beta_{\sigma,k+1}>0$ and
$R'(x_{k})=\beta
_{\nu,k}-\beta_{\sigma,k}<0$.
Then $R'(y_{k})<R'(x_{k})<0$, and there exists a unique $z_{k}\in
(x_{k+1},y_{k})$
such that $R(z_{k})=R(y_{k})$. Further, $R'(x_{k+1})\geq R'(z_{k})>0$.}
\label{figRx}
\end{figure}

Denote by $\rho_{\nu}$ the density function of the measure $\nu$. Then
$\rho_{\nu}$ is constant on $(x_{k+1},x_{k})$. In contrast, $\rho
_{\sigma}$
is nonincreasing, $\rho_{\sigma}(x_{k})=2\gamma_{\sigma,k}<2\gamma
_{\nu,k}=\rho_{\nu}(x_{k})$
and $\rho_{\sigma}(x_{k+1})=2\gamma_{\sigma,k+1}>2\gamma_{\nu,k+1}=\rho
_{\nu}(x_{k+1})$.
Set $y_{k}=\sup\{ y<x_{k}\dvtx\rho_{\sigma}(y)\geq\rho_{\nu
}(y)
\} $,
then $y_{k}\in(x_{k+1},x_{k})$. Further, $R(x)\triangleq C_{\nu
}(x)-C_{\sigma}(x)$
defined on $[x_{k+1},x_{k}]$ is zero at the endpoints, has nondecreasing
second derivative and is concave on $[x_{k+1},y_{k}]$ and strictly
convex on $[y_{k},x_{k}]$ (see Figure~\ref{figRx}).

Let $z_{k}\in(x_{k+1},y_{k})$ be the unique value such that
$R(z_{k})=R(y_{k})$.
Then $R'(x_{k+1})\geq R'(z_{k})>0$ and $R'(y_{k})<R'(x_{k})\leq0$.
Set $H(x)=R(y_{k}-x)-R(y_{k})$ on $[0,y_{k}-z_{k}]$. Then using Proposition
\ref{propIncrDSlope}, we obtain $R'(z_{k})\geq\llvert
R'(y_{k})\rrvert$.
Hence, $\beta_{\nu,k+1}-\beta_{\sigma,k+1}=R'(x_{k+1})\geq
R'(z_{k})\geq
\llvert R'(y_{k})\rrvert>\llvert R'(x_{k})\rrvert=\beta
_{\sigma,k}-\beta_{\nu,k}>0$.
\end{pf}

Return to the proof of Proposition \ref{proplessthanone}. Using Lemma
\ref{lemxk},
we construct a decreasing sequence of points $(x_{k})_{k\geq1}$ at
which $C_{\nu}-C_{\sigma}$ changes sign. Moreover, $\llvert C_{\nu
}'(x_{k})-C_{\sigma}'(x_{k})\rrvert=\llvert\beta_{\nu,k}-\beta
_{\sigma,k}\rrvert\geq\vartheta$.
Let $x_{\infty}=\lim_{k\uparrow\infty}x_{k}$ then $x_{\infty}\geq0$.
Observe that $\lim_{k\uparrow\infty}(\beta_{\nu,k}-\beta_{\sigma
,k})=C_{\nu}'(x_{\infty})-C_{\sigma}'(x_{\infty})$
exists. However, $\limsup_{k\uparrow\infty}(\beta_{\nu,k}-\beta
_{\sigma,k})\geq\vartheta>0$
and $\liminf_{k\uparrow\infty}(\beta_{\nu,k}-\beta_{\sigma,k})\leq
-\vartheta<0$,
which is a contradiction. Hence, there cannot be distinct elements
$\nu$ and $\sigma$ in set $\mathcal{A}_{\mu}^{*}$.

The above is predicated on the assumption that $C_{\nu
}(x_{1})=C_{\sigma
}(x_{1})>C_{\mu}(x_{1})$.
Now suppose $C_{\nu}(x_{1})=C_{\sigma}(x_{1})=C_{\mu}(x_{1})$. Recall
that $C{}_{\nu}>C_{\sigma}$ on an interval to the left of $x_{1}$.
Let $x_{2}=\sup\{ x<x_{1}\dvtx C_{\nu}(x)=C_{\sigma}(x) \} $.
Then $x_{2}\in[0,x_0)$ and $C{}_{\nu}(x)>C_{\sigma}(x)$ on $(x_{2},x_{1})$.

Assume that $C_{\nu}(x_{2})=C_{\sigma}(x_{2})=C_{\mu}(x_{2})$. Then,
by Proposition \ref{propSmooth},\break $C_{\nu}'(x_{1})=C_{\sigma}'(x_{1})$
and $C_{\nu}'(x_{2})=C_{\sigma}'(x_{2})$.
Since $C_{\nu}'$ is linear and $C_{\sigma}'$ is concave
on $(x_{2},x_{1})$, it follows that $C_{\nu}'(x)\leq C_{\sigma}'(x)$
for all
$x\in[x_{2},x_{1}]$, which is a contradiction. Hence, $C_{\nu
}(x_{2})=C_{\sigma}(x_{2})>C_{\mu}(x_{2})$.

Now starting the construction at $x_{2}$, rather than $x_{1}$, we
are in the same case as discussed previously. In particular, there
cannot be distinct elements $\nu$ and $\sigma$ in set~$\mathcal
{A}_{\mu}^{*}$.\vspace*{-9pt}
\end{pf}

\begin{appendix}\label{secProveAtomfree}
\section*{Appendix: Necessity of conditions for a symmetric Nash~equilibrium}\vspace*{-9pt}

\begin{pf*}{Proof of Theorem \ref{thmAtomfree}}
First, we argue that for $\nu\in\sP$ to be a symmetric Nash
equilibrium we must have that $\overline{\nu}=\overline{\mu}$.
Suppose not. Then $\overline{\nu} < \overline{\mu}$.

Let $\alpha$ be the unique solution of $T(\cdot)=0$ where $T(x) =
P_{\mu
}(x) - x + \overline{\nu}$. There are two cases; either $P_\nu
(\alpha)
= P_{\mu}(\alpha) = \alpha- \overline{\nu}$ or $P_\nu(\alpha) >
P_{\mu
}(\alpha)$. We consider the second case first; the first case is
degenerate and will be treated subsequently.

So suppose $P_\nu(\alpha) > P_{\mu}(\alpha)$. Then $P_\nu(x) \geq
x -
\overline{\nu} > P_{\mu}(x)$ on $(\alpha, \infty)$ and
$\varepsilon:= \inf_{x \geq\alpha} [P_{\nu}(x) - P_{\mu}(x)] > 0$. The
aim is to construct an interval $(\gamma,\beta)$ and to modify $\nu$ by
moving the mass of $\nu$ in this interval to the right-hand endpoint,
whilst preserving the admissibility property. Then the modified measure
is preferred to $\nu$ when playing against an agent with target law
$\nu$.

Consider $Q(\cdot)$ defined on $x \geq\alpha$ by $Q(x) = P_{\nu}(x) -
\varepsilon$. Then $Q$ is nonnegative. Fix $\beta> \alpha$ and define
$\gamma$ via
\[
\gamma= \mathop{\arg\sup}_{c < \beta} \biggl\{ \frac{Q(\beta) - P_{\nu
}(c)}{\beta- c} \biggr\},\qquad
\Gamma= \frac{Q(\beta) -
P_{\nu}(\gamma)}{\beta- \gamma}.
\]
(Note $\gamma$ may not be uniquely defined, but $\Gamma$ is.) Then
$F_\nu(\gamma-) \leq\Gamma\leq F_\nu(\gamma) < F_{\nu}(\beta-)$.

Let $P_\sigma$ be defined by
\[
P_{\sigma}(x) = \cases{ P_{\nu}(x), &\quad$0 \leq x < \gamma$;
\cr
P_{\nu}(\gamma) + (x - \gamma) \Gamma, &\quad$\gamma\leq x < \beta$;
\cr
Q(x), &\quad$x \geq\beta$.}
\]
If $\sigma$ is the measure associated with the put price function
$P_\sigma$ then $\sigma\geq\nu$ in the sense of first-order
stochastic dominance, and $P_\sigma\geq P_\mu$ so that $\sigma$ is
weakly admissible.

For any admissible strategy $\pi$,
\begin{eqnarray*}
V_{\sigma,\pi}^{1}-V_{\nu,\pi}^{1} & = & \int
_{0}^{\infty}\pi(dx) \bigl\{ \bigl[
\bigl(1-F_{\sigma}(x) \bigr)+\theta\bigl(F_{\sigma}(x)-F_{\sigma}(x-)
\bigr) \bigr]
\\
&&\hspace*{51pt}{} - \bigl[ \bigl(1-F_{\nu}(x) \bigr)+\theta
\bigl(F_{\nu}(x)-F_{\nu}(x-) \bigr) \bigr] \bigr\}
\\
& = & \pi\bigl(\{\gamma\} \bigr) (1-\theta) \bigl[F_{\nu}(\gamma)-
\Gamma\bigr]
\\
&&{} +\int_{(\gamma,\beta)}\pi(dx) \bigl\{(1-\theta)
\bigl[F_{\nu
}(x)-\Gamma\bigr]+\theta\bigl[F_{\nu}(x-)-\Gamma
\bigr] \bigr\}
\\
&&{} +\pi\bigl(\{\beta\} \bigr)\theta\bigl(F_{\nu}(
\beta-)- \Gamma\bigr)
\\
& \geq& (1-\theta) \biggl\{ \pi\bigl(\{\gamma\} \bigr) \bigl[F_{\nu}(
\gamma)-\Gamma\bigr]+\int_{(\gamma,\beta)}\pi(dx) \bigl[F_{\nu}(x)-
\Gamma\bigr] \biggr\}.
\end{eqnarray*}
Then since $\nu$ assigns positive mass to $[\gamma,\beta)$ it follows
that $V^1_{\sigma, \nu} - V^1_{\nu,\nu} > 0$ and $(\nu,\nu)$
cannot be
a symmetric Nash equilibrium.

Now consider the degenerate case $P_\nu(\alpha)= \alpha- \overline
{\nu
}$. Then $\nu$ has support on $[0,\alpha]$ and an atom at $\alpha$ and
$\mu$ assigns positive mass to $(\alpha,\infty)$.

Define $\sigma$ via the put price function $P_\sigma$ where
$P_{\sigma}(x) = P_{\nu}(x)$ for $x \leq\alpha$ and $P_{\sigma}(x)=
P_{\mu}(x)$ for $x > \alpha$. Note that $P_{\nu}$ and $P_{\mu}$ are
convex and $P'_{\sigma}(\alpha-) = P'_{\nu}(\alpha-) \leq P'_{\mu
}(\alpha-) \leq P'_{\mu}(\alpha+) = P'_{\sigma}(\alpha+)$, and hence
$P_{\sigma}$ is convex.
Then also $\sigma$ is admissible and
\begin{eqnarray*}
V_{\sigma,\pi}^{1}-V_{\nu,\pi}^{1} & =&\pi\bigl(\{
\alpha\} \bigr) (1-\theta) \bigl[1-F_{\mu
}(\alpha) \bigr]+\int
_{(\alpha,\infty)}\pi(dx)\bigl\{ \bigl[1-F_{\mu}(x) \bigr]\bigr\}
\\
&&{} +\theta\bigl[F_{\mu}(x)-F_{\mu}(x-) \bigr]
\end{eqnarray*}
and $V^1_{\sigma, \nu} - V^1_{\nu,\nu} > \nu(\{\alpha\})(1-\theta
) \mu
((a,\infty))> 0$. We conclude in this case also that $(\nu,\nu)$ cannot
be a symmetric Nash equilibrium.

Second, we show that for $\nu$ to be a symmetric Nash equilibrium we
must have that $\nu$ has no atoms at points above zero. Assume that
${\nu}$ is strongly admissible and that $\nu$ places an atom of size
$p>0$ at $z>0$. We aim to show that $\nu$ cannot correspond to a
symmetric Nash equilibrium by considering the impact of splitting the
mass at $z$ into a mass of size $q$ at $z - \varepsilon_1$ and a mass of
size $p-q$ at $z+\varepsilon_2$ where $q \ll p $ and $\varepsilon_1 \gg
\varepsilon_2$, in such a way that the mean is preserved.

Let the measure $\sigma$ be given by
\[
F_{\sigma}(x)= %
\cases{ F_{\nu}(x), &\quad if $x
\in[0,z-\varepsilon_{1})\cup[z+\varepsilon_{2},\infty)$,
\cr
F_{\nu}(x)+q, &\quad if $x\in[z-\varepsilon_{1},z)$,
\cr
F_{\nu}(x)-(p-q), &\quad if $x\in[z,z+\varepsilon_{2})$,}
\]
where $\varepsilon_{2}\in(0,\frac{(1-\theta)zp}{1+\theta
p} )$,
$\varepsilon_{1}\in(\frac{(1+\theta p)\varepsilon_{2}}{(1-\theta
)p},z )$
and $q=\frac{\varepsilon_{2}p}{\varepsilon_{1}+\varepsilon_{2}}\in(0,p)$.
Observe that $(z-\varepsilon_{1})q+(z+\varepsilon_{2})(p-q)=zp$, and hence
$F_{\nu}$ and $F_{\sigma}$ have the same mean. Then
$C_{\sigma}(x)=C_{\nu}(x)$ if $x\in[0,z-\varepsilon_{1})\cup
[z+\varepsilon
_{2},\infty)$,
$C_{\sigma}(x)=C_{\nu}(x)+[x-(z-\varepsilon_{1})]q$ if $x\in
[z-\varepsilon_{1},z)$,
and $C_{\sigma}(x)=C_{\nu}(x)+(z+\varepsilon_{2}-x)(p-q)$ if $x\in
[z,z+\varepsilon_{2})$.
This implies that $C_{\sigma}(x)\geq C_{\nu}(x)\geq C_{\mu}(x)$.
Thus, $\sigma$ is strongly admissible.

Suppose that Player 2 chooses law ${\nu}$. Then
\begin{eqnarray*}
V_{\sigma,\nu}^{1}-V_{\nu,\nu}^{1} &
=&F_{\nu} \bigl((z-\varepsilon_{1})- \bigr)q+F_{\nu
}
\bigl((z+\varepsilon_{2})- \bigr) (p-q)-F_{\nu}(z-)p-\theta
p^{2}
\\
&&{} +\theta\nu\bigl(\{z-\varepsilon_{1}\} \bigr)q+\theta\nu
\bigl(\{ z+\varepsilon_{2}\} \bigr) (p-q)
\\
& \geq& F_{\nu} \bigl((z-\varepsilon_{1})- \bigr)q+F_{\nu}
\bigl((z+\varepsilon_{2})- \bigr) (p-q)-F_{\nu}(z-)p-\theta
p^{2}
\\
& =&p \biggl\{ \bigl[F_{\nu} \bigl((z+\varepsilon_{2})-
\bigr)-F_{\nu}(z-) \bigr]\frac
{\varepsilon_{1}}{\varepsilon_{1}+\varepsilon_{2}}
\\
&&\hspace*{11pt}{} - \bigl[F_{\nu}(z-)-F_{\nu} \bigl((z-
\varepsilon_{1})- \bigr) \bigr]\frac
{\varepsilon_{2}}{\varepsilon_{1}+\varepsilon_{2}}-\theta p \biggr\}.
\end{eqnarray*}
Since $F_\nu((z+\varepsilon_{2})-)-F_\nu(z-)\geq p$ and $0 \leq F_\nu
(z-)-F_\nu((z-\varepsilon_{1})-)\leq1$,
\[
V^1_{\sigma,\nu}-V^1_{\nu,\nu}\geq p \biggl[
\frac{\varepsilon
_{1}}{\varepsilon
_{1}+\varepsilon_{2}}p-\frac{\varepsilon_{2}}{\varepsilon
_{1}+\varepsilon
_{2}}-\theta p \biggr]=p \frac{\varepsilon_{1}(1-\theta)p-(1+\theta
p)\varepsilon_{2}}{\varepsilon_{1}+\varepsilon_{2}}>0,
\]
which contradicts the assumption that $(\nu,\nu)$ is a Nash
equilibrium. Thus, $F_{\nu}(x)$
is continuous on $(0,\infty)$.

Third, we consider the possibility of an atom at zero. Suppose $(\nu,\nu
)$ is a symmetric Nash equilibrium and set $p=F_{\nu}(0)$ and $p_{\mu
}=F_{\mu}(0)$. Since $\nu$ must be strongly (and not merely weakly)
admissible by the first part of the theorem, $\mu\preceq_{\mathrm{cx}}\nu$ and
we must have $p\geq p_{\mu}$. Suppose that $p>p_{\mu}$; we aim to
derive a contradiction. Fix any $q$ such that
$0<q<\min\{ p\sqrt{1-\theta},1-p \} $. Since by the arguments
above we must have that $F_{\nu}$
is continuous on $(0,\infty)$, there exists $\varepsilon>0$ such that
$\nu((0,\varepsilon))=q$, and then $F_{\nu}(\varepsilon)=p+q$. For any
$\phi
\in(0,1)$,
let measure $\sigma_{\phi}$ be given by
\[
F_{\sigma_{\phi}}(x)= %
\cases{ (1-\phi)F_{\nu}(x), &\quad if
$x\in[0,\delta)$,
\cr
\phi(p+q)+(1-\phi)F_{\nu}(x), &\quad if $x\in[
\delta,\varepsilon)$,
\cr
F_{\nu}(x), &\quad if $x\in[\varepsilon,\infty)$,}
\]
where $\delta=\int_{0}^{\varepsilon}y\nu(dy)/(p+q)$. Then $\sigma
_{\phi}$
is a probability measure with the same mean as $\nu$.
It follows that
\begin{eqnarray*}
V^1_{\sigma_{\phi},\nu}-V^1_{\nu,\nu} & =&\phi\biggl\{
(p+q)F_{\nu
}(\delta)-\theta p^{2}-\int_{0}^{\varepsilon}F_{\nu}(y)
\nu(dy) \biggr\}
\\
& \geq&\phi\bigl\{ (p+q)p-\theta p^{2}-(p+q)q \bigr\} =\phi\bigl\{
(1-\theta)p^{2}-q^{2} \bigr\} >0.
\end{eqnarray*}
Hence, if $\sigma_{\phi}$ is strongly admissible then Player 1 would prefer
strategy $\sigma_{\phi}$ to $\nu$.

Making $q$ and $\varepsilon$ smaller if necessary, and using the fact
that $C_{\nu}'(0+)=p-1>p_{\mu}-1=C_{\mu}'(0+)$, we can insist
that $C_{\nu}(x)-C_{\mu}(x)>(p-p_{\mu})x/2$ for $x\in(0,\varepsilon)$.
Observe that $C_{\nu}(x)-C_{\sigma_{\phi}}(x)=0$ for $x\geq\varepsilon$.
Moreover, for $x\in[0,\varepsilon)$, since $F_{\nu}(x)-F_{\sigma_{\phi
}}(x)\leq\phi F_{\nu}(\delta)$,
$C_{\nu}(x)-C_{\sigma_{\phi}}(x)\leq\phi F_{\nu}(\delta)x$. Then,
if $\phi\leq\frac{p-p_{\mu}}{2F_{\nu}(\delta)}$, we have
\begin{eqnarray*}
C_{\sigma_{\phi}}(x)-C_{\mu}(x) & = &\bigl(C_{\nu}(x)-C_{\mu
}(x)
\bigr)- \bigl(C_{\nu}(x)-C_{\sigma_{\phi}}(x) \bigr)
\\
& >&\tfrac{1}{2}(p-p_{\mu})x-\phi F_{\nu}(\delta)x\geq0
\end{eqnarray*}
for all $x\in(0,\varepsilon)$, and thus $\mu\preceq_{\mathrm{cx}}\sigma_{\phi}$.
Hence, $\sigma_\phi$ is admissible for small enough $\phi$ and $(\nu,\nu
)$ cannot be a symmetric Nash equilibrium.
It follows that $p=p_\mu$ and $F_{\nu}(0)=F_{\mu}(0)$.
\end{pf*}

\begin{pf*}{Proof of forward implication of Theorem \ref{thmcharacterisation}}
We have shown that if $\nu$ is a symmetric Nash equilibrium then $\nu$
must be strongly admissible with respect to $\mu$. It remains to show,
first that $\nu$ must have a decreasing density, and second that the
density can only decrease at points where the convex order constraint
is binding.

Let $(\pi,\pi)$ be a candidate symmetric Nash equilibrium, and suppose
$\pi$ has the properties given in Theorem~\ref{thmAtomfree}.
Suppose that $\pi$ is such that $F_{\pi}$ is not concave on
$(0,\infty
)$. Then there exist $a,b$ with $0<a<b< \infty$ such that
$F_\pi(b) - F_\pi(a)>0$ and for $x \in(a,b)$,
\[
F_\pi(x) < F_{\pi}(a) + \frac{x-a}{b-a} \bigl[
F_{\pi}(b)-F_{\pi}(a) \bigr].
\]
Let $\sigma$ be such that $F_{\sigma}(x) = F_\pi(x)$ for $x$ outside
$[a,b)$ and for $x \in[a,b)$,
$F_\sigma(x) = F_\pi(a) + \phi[ F_{\pi}(b) - F_{\pi}(a)]$, where
$\phi
$ is chosen so that the means of $\sigma$ and $\pi$ agree.
Then $\int_a^b (F_\pi(x)-F_\pi(a)) \,dx = \int_a^b (F_\sigma
(x)-F_\pi(a))
\,dx = \phi[ F_{\pi}(b) - F_{\pi}(a)] (b-a)$ and it follows that
$\phi<1/2$.
Then
\begin{eqnarray*}
V_{\sigma,\pi}^{1}-V_{\pi,\pi}^{1} & = & \int
_{[a,b)}F_{\pi
}(x) \bigl[\sigma(dx)-\pi(dx) \bigr]
\\
& = & \int_{[a,b)} \bigl(F_{\pi}(x)-F_{\pi}(a)
\bigr) \bigl[\sigma(dx)-\pi(dx) \bigr]
\\
& = & (1-\phi) \bigl(F_{\pi}(b)-F_{\pi}(a)
\bigr)^{2}-\frac{(F_{\pi
}(x)-F_{\pi}(a))^{2}}{2}\bigg|_{x=a}^{b}
\\
& = & \biggl(\frac{1}{2}-\phi\biggr) \bigl(F_{\pi}(b)-F_{\pi}(a)
\bigr)^{2}>0,
\end{eqnarray*}
and hence $(\pi,\pi)$ cannot be a symmetric Nash equilibrium.

Now suppose that $\nu$ is strongly admissible, has no atoms on
$(0,\infty)$ and $F_\nu$ is concave on $(0,\infty)$. Then the density
$f_\nu\triangleq F_\nu'$ is decreasing. Without loss of generality we
take $f_\nu$ to be right-continuous. Suppose that $z$ is such that
$f_\nu(x) > f_\nu(z)$ for all $x<z$ and that $C_{\nu}(z)>C_{\mu}(z)$.
Then there exists $\varepsilon\in(0,z)$ such that $C_{\nu}(z-\varepsilon
) +
2 \varepsilon C'_{\nu}(z - \varepsilon) > C_{\mu}(z+\varepsilon)$ and
$C_{\nu}(z+\varepsilon) - 2 \varepsilon C'_{\nu}(z + \varepsilon) >
C_{\mu
}(z-\varepsilon)$, and it follows that if $\sigma$ is any measure such
that $C_{\sigma} = C_\nu$ outside $(z - \varepsilon,z+\varepsilon)$ then
$C_{\sigma}(x) > C_\mu(x)$ on the interval $[z-\varepsilon,z+\varepsilon]$
and $\sigma$ is strongly admissible.

Given $z$ and $\varepsilon$ as in the previous paragraph, let $A_+ =
\int_z^{z+\varepsilon} (F_{\nu}(z+\varepsilon) - F_\nu(x)) \,dx$ and let $A_{-}(x)
= \int_x^{F_\nu(z)}
(z - F_{\nu}^{-1}(u)) \,du$. Let $v$ solve $A_{-}(v) = A_+$ and set $w =
F_{\nu}^{-1}(v)$. Note that $A_{-}(F_\nu(z-\varepsilon)) \geq A_+$ by the
concavity of $F_\nu$. Then $w \geq z -\varepsilon$ and $v \geq F_{\nu
}(0)$. Note further that $A_{-}(F_{\nu}(z) - \{ F_{\nu}(z+\varepsilon) -
F_\nu(z) \}) < f_\nu(z)\{ F_{\nu}(z+\varepsilon) - F_\nu(z) \}/2 \leq A_+$
and, therefore, $v < 2 F_\nu(z) - F_\nu(z + \varepsilon)$, see
Figure~\ref
{figConcavity}.

\begin{figure}

\includegraphics{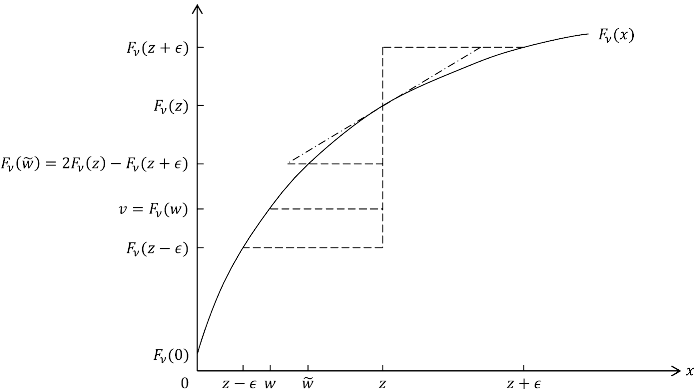}

\caption{Plot of the concave function $F_{\nu}(\cdot)$. By concavity of
$F_\nu$ we have $w > z-\varepsilon$. If $F_{\nu}(\tilde{w}) = F_{\nu
}(z) -
\{ F_{\nu}(z+\varepsilon) - F_\nu(z) \}$, then the area $A_-(F_{\nu
}(\tilde
{w}))$ is strictly less than the area of the triangle which lies above
the horizontal line at $F_\nu(\tilde{w})$, to the left of the vertical
line at $z$ and below the tangent to $F_\nu$ at $z$ with slope $f_\nu
(z)$, as represented by the sloping line. This area is equal to the
area of the triangle delimited by the vertical line at $z$, the
horizontal line at $F_{\nu}(z+\varepsilon)$ and the same tangent to
$F_\nu
$. In turn, this area is less than or equal to $A_+$. It follows that
$w < \tilde{w}$ and $v=F_\nu(w) < F_\nu(\tilde{w}) = 2F_{\nu}(z) -
F_{\nu}(z+\varepsilon)$.}
\label{figConcavity}
\end{figure}

Then by construction, $\int_w^{z+\varepsilon} x \nu(dx) = z \int
_w^{z+\varepsilon} \nu(dx)$, and if we define $\sigma$ by $F_\sigma=
F_\nu
$ outside $(w, z+\varepsilon)$ and $F_\sigma(x) = F_\nu(w)$ for $w < x <
z$ and $F_\sigma(x) = F_\nu(z+\varepsilon)$ for $z \leq x <
z+\varepsilon
$ we
have that $\sigma$ has the same mean as $\nu$. [In effect, $\sigma$
replaces the mass of $\nu$ on $(w, z+\varepsilon)$ with a point mass at
$z$.] Then, by the remarks of the previous paragraph, $\sigma$ is
strongly admissible with respect to $\mu$.

Finally,
\begin{eqnarray*}
V^1_{\sigma,\nu} - V^1_{\nu,\nu} & = & \int
_{(w,z+\varepsilon)} F_{\nu
}(x) \bigl[ \sigma(dx) - \nu(dx) \bigr]
\\
& = & \bigl(F_\nu(z+\varepsilon) - v \bigr) F_\nu(z) -
\frac{F_\nu(z+\varepsilon
)^2 -
v^2}{2}
\\
& = & \frac{(F_\nu(z+\varepsilon) - v)}{2} \bigl[ 2F_\nu(z) - v - F_\nu
(z+\varepsilon) \bigr] > 0
\end{eqnarray*}
and hence $(\nu,\nu)$ cannot be a symmetric Nash equilibrium.
\end{pf*}
\end{appendix}



%

\printaddresses
\end{document}